\documentclass[12pt]{article}
\usepackage{amsmath}
\usepackage{amssymb, amscd, amsthm,amsfonts}
\usepackage[all]{xy}
\usepackage[dvips]{graphicx}
\usepackage{verbatim}
\setlength{\hoffset}{-1cm} \setlength{\textwidth}{14.5cm}
\pagestyle{plain}
\newtheorem{theo}[]{{\emph{Theorem}}}

\newtheorem{lemma}[]{{\emph{Lemma}}}

\theoremstyle{remark}
\newtheorem*{remark}{\textbf{Remark}}

\theoremstyle{definition}





\newcommand{\ra}{\rightarrow}

\newcommand{\calC}{\mathcal{C}}

\newcommand{\calF}{\mathcal{F}} 





\newcommand{\bF}{\mathbb{F}}

\newcommand{\cC}{\mathcal{C}}

\newcommand{\al}{\alpha}
\newcommand{\be}{\beta}
\newcommand{\ga}{\gamma}
\newcommand{\veps}{\varepsilon}


\newcommand{\om}{\omega}



\DeclareMathOperator{\Tra}{\mathrm Tr}







\begin{document}

\title{Cyclic Codes and Sequences from a Class of Dembowski-Ostrom Functions} \maketitle

\begin{center}
$\mathrm{Jinquan\;\; Luo\qquad\quad San\;\; Ling \quad\qquad
and\qquad Chaoping\;\; Xing}$ \footnotetext{J.Luo is with the School
of Mathematics, Yangzhou University, Jiangsu Province, 225009, China
and with the Division of Mathematics, School of Physics and
Mathematical Sciences, Nanyang Technological University, Singapore.
\par S.Ling and C.Xing are with the Division of Mathematics, School
of Physics and Mathematical Sciences, Nanyang Technological
University, Singapore.
\par \quad E-mail addresses: jqluo@ntu.edu.sg, lingsan@ntu.edu.sg, xingcp@ntu.edu.sg.}
\end{center}
\newpage
 \textbf{Abstract} \par Let $q=p^n$ with $p$ be an odd
 prime.
Let $0\leq k\leq n-1$ and $k\neq n/2$. In this paper we determine
the value distribution of following exponential(character) sums
\[\sum\limits_{x\in
\bF_q}\zeta_p^{\Tra_1^n(\alpha x^{p^{3k}+1}+\beta
x^{p^k+1})}\quad(\alpha\in \bF_{p^m},\beta\in \bF_{q})\]
 and
\[\sum\limits_{x\in
\bF_q}\zeta_p^{\Tra_1^n(\alpha x^{p^{3k}+1}+\beta x^{p^k+1}+\ga
x)}\quad(\alpha\in \bF_{p^m},\beta,\ga\in \bF_{q})\]

 where $\Tra_1^n: \bF_q\ra \bF_p$ and $\Tra_1^m: \bF_{p^m}\ra\bF_p$ are the canonical trace
mappings and $\zeta_p=e^{\frac{2\pi i}{p}}$ is a primitive $p$-th
root of unity. As applications:
 \begin{itemize}
    \item[(1).]  We determine the weight distribution
of the cyclic codes $\cC_1$ and $\cC_2$ over $\bF_{p^t}$ with
parity-check polynomials $h_2(x)h_3(x)$ and $h_1(x)h_2(x)h_3(x)$
respectively where $t$ is a divisor of $d=\gcd(n,k)$, and $h_1(x)$,
$h_2(x)$ and $h_3(x)$ are the minimal polynomials of $\pi^{-1}$,
$\pi^{-(p^k+1)}$ and $\pi^{-(p^{3k}+1)}$ over $\bF_{p^t}$
respectively for a primitive element $\pi$ of $\bF_q$.
    \item[(2).]We determine the correlation distribution among
    a family of m-sequences. \end{itemize}

\emph{Index terms:}\;Exponential sum, Cyclic code, Sequence, Weight
distribution, Correlation distribution
\newpage
\section{Introduction}

\quad Basic results on finite fields could be found in \cite{Lid Nie}. These notations are fixed throughout this paper except for specific statements.
\begin{itemize}
  \item Let $p$ an odd prime, $p^*=(-1)^{\frac{p-1}{2}}p$, $q=p^n$ and $\bF_p$, $\bF_q$ be the finite fields
  of order $p$, $q$ respectively. Let $\pi$ be a primitive element of $\bF_q$.
  \item Let $\Tra_i^j:\bF_{p^i}\ra\bF_{p^j}$ be the trace mapping, $\zeta_p=\mathrm{exp}(2\pi
\sqrt{-1}/p)$ be a $p$-th root of unity and
$\chi(x)=\zeta_p^{\Tra_1^n(x)}$ be the canonical additive character on
$\bF_q$.
  \item Let $k$ be a positive integer, $1\leq k\leq n-1$ and  $k\notin \{\frac{n}{4},\frac{n}{2},\frac{3n}{4}\}$.
  Let $d=\gcd(n,k)$, $q_0=p^d$, $q_0^*=(-1)^{\frac{q_0-1}{2}}q_0$ and $s=n/d$. Let $t$ be a divisor of $d$ and $n_0=n/t$.
  \item Let $m=n/2$ (if $n$ is even) and $\mu=(-1)^{m/d}$.
\end{itemize}

\quad Let $C$ be an $[l,k,d]$ linear code and $A_i$ be the number of
codewords in $\cC$ with Hamming weight $i$. The weight distribution
$\{A_i\}_{i=0}^{l}$ is an important research object for theoretical
and application interests(see Fitzgerald and Yucas \cite{Fit Yuc},
McEliece \cite{McEl}, McEliece and Rumsey \cite{McE Rum},  van der
Vlugt \cite{Vand}, Wolfmann \cite{Wolf} and the references therein).

For a cyclic code, the Hamming weight of each codeword can be
expressed by certain combination of general exponential(character)
sums (see Feng and Luo \cite{Fen Luo}, \cite{Fen Luo2}, Luo and Feng
\cite{Luo Fen}, \cite{Luo Fen2}, Luo, Tang and Wang \cite{Luo Tan Wan}, Luo \cite{Luo},
 van der Vlugt \cite{Vand2}, Yuan,
Carlet and Ding \cite{Yua Car}, Zeng, Hu, Jia, Yue and Cao \cite{Zen Hu Jia Yue Cao}, Zeng and Li \cite{Zen Li}). More exactly speaking, let $t\mid n$, $\cC$ be the cyclic code over $\bF_{p^t}$ with
length $l=q-1$ and parity-check polynomial,
\[h(x)=h_1(x)\cdots h_u(x)\quad (u\geq 2)\]
where $h_i(x)$ $(1\leq i\leq e)$ are distinct irreducible
polynomials in $\bF_{p^t}[x]$ with degree $e_i$ $(1\leq i\leq u)$,
then $\mathrm{dim}_{\bF_{p^t}}\cC=\sum\limits_{i=1}^{u}e_i$. Let
 $\pi^{-s_i}$ be a zero
of $h_i(x)$, $1\leq s_i\leq q-2$ $(1\leq i\leq u).$ Then the
codewords in $\cC$ can be expressed by
\[c(\alpha_1,\cdots,\alpha_u)=(c_0,c_1,\cdots,c_{l-1})\quad (\alpha_1,\cdots,\alpha_u\in \bF_q)\]
where
$c_i=\sum\limits_{\lambda=1}^{u}\Tra^n_{t}(\alpha_{\lambda}\pi^{is_{\lambda}})$
$(0\leq i\leq n-1)$. Therefore the Hamming
weight of the codeword $c=c(\alpha_1,\cdots,\alpha_u)$ is
{\setlength\arraycolsep{2pt}
\begin{eqnarray} \label{Wei}
w_H\left(c\right)&=& \#\left\{i\left|0\leq i\leq l-1,c_i\neq
0\right.\right\}
\nonumber\\[1mm]
&=& l-\#\left\{i\left|0\leq i\leq l-1,c_i=0\right.\right\}
\nonumber\\[1mm]
&=& l-\frac{1}{p^t}\,\sum\limits_{i=0}^{l-1}\sum\limits_{a\in
\bF_{p^t}}\zeta_p^{\Tra_1^{t}\left(a\cdot\Tra_{t}^n\left(\sum\limits_{\lambda=1}^{u}\alpha_{\lambda}\pi^{is_{\lambda}}\right)\right)}
\nonumber\\[1mm]
&=&l-\frac{l}{p^t}-\frac{1}{p^t}\,\sum\limits_{a\in
\bF_{p^t}^*}\sum\limits_{x\in \bF_q^*}\zeta_p^{\Tra_1^n(af(x))}
\nonumber
\\[1mm]
&=&l-\frac{l}{p^t}+\frac{p^t-1}{p^t}-\frac{1}{p^t}\,\sum\limits_{a\in \bF_{p^t}^*}S(a\alpha_1,\cdots,a\alpha_u)\nonumber\\[1mm]
&=&p^{n-t}(p^t-1)-\frac{1}{p^t}\,\sum\limits_{a\in
\bF_{p^t}^*}S(a\alpha_1,\cdots,a\alpha_u)
\end{eqnarray}
} where
$f(x)=\alpha_1x^{s_1}+\alpha_2x^{s_2}+\cdots+\alpha_ux^{s_u}\in
\bF_{q}[x]$, $\bF_q^*=\bF_q\backslash\{0\}$,
$\bF_{p^t}^*=\bF_{p^t}\backslash\{0\}$,  and
\[S(\alpha_1,\cdots,\alpha_u)=\sum\limits_{x\in \bF_q}\zeta_p^{\Tra_1^n(\alpha_1x^{s_1}+\cdots+\alpha_ux^{s_u})}.\]
In this way, the weight distribution of cyclic code $\cC$ can be
derived from the explicit evaluating of the exponential sums
\[S(\alpha_1,\cdots,\alpha_u)\quad(\alpha_1,\cdots,\alpha_u\in \bF_q).\]

Let $h_1(x)$, $h_2(x)$ and
$h_{3}(x)$ be the minimal polynomials of $\pi^{-1},\pi^{-(p^k+1)}$
and $\pi^{-{(p^{3k}+1)}}$ over $\bF_{p^t}$ respectively. Then
$\mathrm{deg}\,h_i(x)=n_0\; \text{for}\; 1\leq i\leq 3.$

 Let
$\cC_1$ and $\cC_2$ be the cyclic codes over $\bF_{p^t}$ with length
$l=q-1$ and parity-check polynomials $h_2(x)h_3(x)$ and
$h_1(x)h_2(x)h_3(x)$ respectively. Then we know that the dimensions
of $\cC_1$ and $\cC_2$ over $\bF_{p^t}$ are $2n_0$ and $3n_0$
respectively.

A Dembowski-Ostrom function on $\bF_q$ is a $\bF_q$-linear combination of $x^{p^i+p^j}$ with $0\leq i\leq n-1$.
Let $f_{\al,\be}(x)=\alpha
x^{p^{3k}+1}+\beta x^{p^k+1}$
for $\al,\be\in \bF_q$.  Define the exponential sums
\begin{equation}\label{def T}
T(\al,\be)=\sum\limits_{x\in \bF_q}\zeta_p^{\Tra_1^n\left(f_{\al,\be}(x)\right)}
\end{equation}
and for $\ga\in \bF_q$,
\begin{equation}\label{def S}
S(\al,\be,\ga)=\sum\limits_{x\in \bF_q}\zeta_p^{\Tra_1^n\left(f_{\al,\be}(x)+\ga x\right)}.
\end{equation}
Then the weight distribution of $\cC_1$ and $\cC_2$ can be
completely determined if $T(\al,\be)$ and $S(\al,\be,\ga)$ are
explicitly evaluated.

Another application of $S(\al,\be,\ga)$ is to calculate the
correlation distribution of corresponding sequences. Let
$\mathcal{F}$ be a collection of $p$-ary m-sequences of period $q-1$
defined by

\[\mathcal{F}=\left\{\left\{a_i(t)\right\}_{i=0}^{q-2}|\,0\leq i\leq L-1 \right\}\]

The \emph{correlation function} of $a_i$ and $a_j$ for a shift
$\tau$ is defined by

\[M_{{i},{j}}(\tau)=\sum\limits_{\lambda=0}^{q-2}\zeta_p^{a_i({\lambda})-a_j({\lambda+\tau})}\hspace{2cm}(0\leq \tau\leq
q-2).\]

 In this paper, we will study the collection of sequences

 \begin{equation}\label{def F}
  \calF=\left\{a_{\al,\be}=\left\{a_{\al,\be}(\pi^{\lambda})\right\}_{\lambda=0}^{q-2}|\,\al, \be\in \bF_{q} \right\}
 \end{equation}
where
  $a_{\al,\be}(\pi^{\lambda})=\Tra_1^n(\al \pi^{\lambda(p^{3k}+1)}+\be
  \pi^{\lambda(p^k+1)}+\pi^{\lambda})$.

Then the correlation function between $a_{\al_1,\be_1}$ and
$a_{\al_2,\be_2}$ by a shift $\tau$ ($0\leq \tau\leq q-2$) is
\begin{equation}\label{cor fun}
\begin{array}{ll}
&M_{(\al_1,\be_1),(\al_2,\be_2)}(\tau)=\sum\limits_{\lambda=0}^{q-2}\zeta_p^{a_{\al_1,\be_1}({\lambda})-
a_{\al_2,\be_2}({\lambda+\tau})}\\[2mm]
&\qquad =\sum\limits_{\lambda=0}^{q-2}\zeta_p^{\Tra_1^n(\al_1
\pi^{\lambda(p^{3k}+1)}+\be_1
  \pi^{\lambda(p^k+1)}+\pi^{\lambda})-\Tra_1^n(\al_2 \pi^{(\lambda+\tau)(p^{3k}+1)}+\be
  \pi^{(\lambda+\tau)(p^k+1)}+\pi^{\lambda+\tau})}\\[2mm]
  &\qquad = S(\al',\be',\ga')-1
  \end{array}
\end{equation}
 where
 \begin{equation}\label{coe cor}
 \al'=\al_1-\al_2 \pi^{\tau(p^{3k}+1)},\quad
 \be'=\be_1-\be_2\pi^{\tau(p^k+1)},\quad \ga'=1-\pi^{\tau}.
 \end{equation}

Pairs of non-binary m-sequences with few-valued cross correlations have
been extensively studied for several decades, see Charpin \cite{Char}, Dobbertin, Helleseth, Kumar and Martinsen
 \cite{Dob Hel Kum Mar}, Gold \cite{Gold}, Helleseth \cite{Hell1}, \cite{Hell2},
Helleseth and Kumar \cite{Hel Kum}, Helleseth, Lahtonen and
Rosendahl \cite{Hel Lah Ros}, Kasami \cite{Kasa1},\cite{Kasa2},  Rosendahl
\cite{Rose1}, \cite{Rose2} and Trachtenberg \cite{Trac},
Xia and Zeng \cite{Xia Zen} and references therein.

In \cite{Zen Hu Jia Yue Cao}, the expoenential sums $T(\al,\be)$ and $S(\al,\be,\ga)$ for $n/d$ odd have been evaluated.
As an application, the weight distribution to the associated $p$-ary cyclic code $\cC_2$ is determined. Our paper
focuses on the case $n/d$ is even. Moreover, we will determine the weight distribution of $\cC_1$ and $\cC_2$
for $t\mid d$. Meanwhile, the correlation distribution of sequences in $\calF$ can also be calculated explicitly.

 This paper is presented as follows. In Section 2 we introduce
some preliminaries. In Section 3 we will study the value
distribution of $T(\al,\be)$ (that is, which value $T(\al,\be)$ takes  on and which frequency of each value) and
the weight distribution of $\cC_1$. In Section 3 we will determine
the value distribution of $S(\al,\be,\ga)$ , the correlation
distribution among the sequences in $\calF$, and then the weight
distribution of $\cC_2$. Most lengthy details are presented in
several appendixes. The main tools are quadratic form theory over
odd characteristic finite fields, some moment identities on
$T(\alpha,\beta)$ and a class of Artin-Schreier curves on finite
fields which we have employed in \cite{Luo Fen} and \cite{Luo Tan Wan}.
 We will focus our study on the odd prime characteristic case
and the binary case will be investigated in a following paper.

\section{Preliminaries}

\quad We follow the notations in Section 1.  The first machinery to
determine the values of exponential sums $T(\alpha,\beta)$ and
$S(\al,\be,\ga)$ defined in (\ref{def T}) and (\ref{def S}) is
quadratic form theory over $\bF_{q_0}$.

 Let $H$
be an $s\times s$ symmetric matrix over $\bF_{q_0}$ and
$r=\mathrm{rank}\,H$. Then there exists $M\in
\mathrm{GL}_s(\bF_{q_0})$ such that $H'=MHM^T$ is diagonal and
$H'=diag(a_1,\cdots,a_r,0,\cdots,0)$ where $a_i\in \bF_{q_0}^*$
($1\leq i\leq r$). Let $\Delta=a_1\cdots a_r$ (we assume $\Delta=1$
when $r=0$) and $\eta_0$ be the quadratic (multiplicative) character
of $\bF_{q_0}$. Then $\eta_0(\Delta)$ is an invariant of $H$ under
the conjugate action of $M\in \mathrm{GL}_s(\bF_{q_0})$.

For the quadratic form
\begin{equation}\label{qua for}
F:\bF_{q_0}^s\ra \bF_{q_0},\quad F(x)=XHX^T\quad
(X=(x_1,\cdots,x_s)\in \bF_{q_0}^s),
\end{equation}
 we have the following result(see \cite{Luo Fen}, Lemma 1).
\begin{lemma}\label{qua}
(i). For the quadratic form $F=XHX^T$ defined in (\ref{qua for}), we
have
\[
\sum\limits_{X\in\bF_{q_0}^s}\zeta_p^{\Tra_1^{d}(F(X))} =\left\{
\begin{array}{ll}
\eta_0(\Delta)q_0^{s-r/2} & \ \ \hbox{if} \ q_0\equiv
1\;(\mathrm{mod}\;
4),\\[2mm]
i^r\eta_0(\Delta){q_0}^{s-r/2}  & \ \ \hbox{if} \ {q_0}\equiv 3\;(\mathrm{mod}\; 4).\\
\end{array}
\right.
\]
(ii). For $A=(a_1,\cdots,a_s)\in \bF_{q_0}^s$, if $2YH+A=0$ has
solution $Y=B\in \bF_{q_0}^s$,

then
$\sum\limits_{X\in\bF_{q_0}^s}\zeta_p^{\Tra_1^{d}(F(X)+AX^T)}=\zeta_p^c\sum\limits_{X\in
\bF_{q_0}^s}\zeta_p^{\Tra_1^{d}\left({F(X)}\right)}$ where
$c=-\Tra_1^{d}\left(BHB^T\right)=\frac{1}{2}\Tra_1^{d}\left(AB^T\right)\in
\bF_p$.

Otherwise
$\sum\limits_{X\in\bF_p^m}\zeta_p^{\Tra_1^{d}(F(X)+AX^T)}=0$.
\end{lemma}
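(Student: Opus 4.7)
The strategy has two ingredients: diagonalize the quadratic form to reduce part (i) to the classical quadratic Gauss sum evaluation, and complete the square in part (ii) to absorb the linear term when possible, with a translation-invariance argument in the obstructed case. No deep new machinery is needed; the lemma packages standard linear algebra into the shape used later to evaluate $T(\alpha,\beta)$ and $S(\alpha,\beta,\gamma)$.

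For part (i), I would apply the invertible $\bF_{q_0}$-linear change of variables $X = YM$, under which $F(X) = YH'Y^T = \sum_{i=1}^r a_i y_i^2$. Since the substitution is a bijection on $\bF_{q_0}^s$, the sum factors as
\begin{equation*}
\sum_{Y \in \bF_{q_0}^s} \zeta_p^{\Tra_1^d\bigl(\sum_{i=1}^r a_i y_i^2\bigr)} = q_0^{s-r} \prod_{i=1}^r \sum_{y \in \bF_{q_0}} \zeta_p^{\Tra_1^d(a_i y^2)}.
\end{equation*}
Each inner sum is a classical quadratic Gauss sum over $\bF_{q_0}$, equal to $\eta_0(a_i)\sqrt{q_0^*}$ with $q_0^* = (-1)^{(q_0-1)/2} q_0$. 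Multiplying over $i$ gives $\eta_0(\Delta)(q_0^*)^{r/2}$, and splitting by the residue of $q_0$ modulo $4$ yields the two stated cases (the factor $i^r$ when $q_0\equiv 3\pmod 4$ arising from $\sqrt{-q_0} = i\sqrt{q_0}$).

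For part (ii), I would complete the square directly. Symmetry of $H$ gives $F(X-B) = F(X) - 2BHX^T + F(B)$, and the hypothesis $2BH = -A$ (i.e.\ $B$ solves $2YH+A=0$) collapses this to $F(X) + AX^T = F(X-B) - F(B)$. Translating $X \mapsto X+B$ in the exponential sum pulls out the factor $\zeta_p^{-\Tra_1^d(F(B))}$ times the pure quadratic sum, and the identity $-BHB^T = \tfrac{1}{2}AB^T$ identifies the stated constant $c$. For the otherwise branch, the failure of $2YH + A = 0$ to have a solution means $-A \notin \mathrm{row}(H) = (\ker H)^{\perp}$ (using $H$ symmetric), so some $Y_0 \in \ker H$ satisfies $AY_0^T \neq 0$; rescaling within the $\bF_{q_0}$-subspace $\ker H$ lets me arrange $\Tra_1^d(AY_0^T) \neq 0$. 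The substitution $X \mapsto X + Y_0$ leaves $F$ unchanged (since $HY_0^T = 0$ kills the cross term and $F(Y_0)=0$) while adding $\Tra_1^d(AY_0^T)$ to the exponent, so the sum equals itself times a nontrivial $p$-th root of unity and hence vanishes.

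The main obstacle is really bookkeeping: tracking signs and the factor $\tfrac12$ in the completion of the square (especially matching the convention $2YH + A = 0$ in the statement with the translate $X - B$ one naturally writes down), and cleanly citing the classical evaluation of the quadratic Gauss sum. Once those are in place, both parts fall out without further input.
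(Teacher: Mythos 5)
You should first note that the paper itself contains no proof of this lemma: it is quoted verbatim from \cite{Luo Fen}, Lemma~1 (where it is proved for quadratic forms over the prime field $\bF_p$), so your write-up is supplying an argument the paper leaves entirely to the reference. Your overall route --- diagonalize via $X=YM$ to factor the sum into one-variable quadratic Gauss sums, complete the square when $2YH+A=0$ is solvable, and kill the sum by a translation through the null space of $H$ when it is not --- is exactly the standard argument behind the cited lemma. Part (ii) is complete and correct: the identity $F(X)+AX^T=F(X-B)-F(B)$, the identification $c=-\Tra_1^{d}(BHB^T)=\tfrac12\Tra_1^{d}(AB^T)$, and the vanishing argument (choosing $Y_0$ with $Y_0H=0$ and $AY_0^T\neq 0$, then rescaling inside the $\bF_{q_0}$-line $\{AY^T : YH=0\}$ to make $\Tra_1^{d}(AY_0^T)\neq 0$) are all sound.

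The one genuine flaw is in part (i), at the step asserting that each inner sum equals $\eta_0(a_i)\sqrt{q_0^*}$. That is the correct value only over the prime field. Over $\bF_{q_0}=\bF_{p^d}$ with the canonical character $\zeta_p^{\Tra_1^{d}(\cdot)}$, the Davenport--Hasse relation (see \cite{Lid Nie}, Theorem 5.15) gives $\sum_{y\in\bF_{q_0}}\zeta_p^{\Tra_1^{d}(y^2)}=(-1)^{d-1}\left(\sqrt{p^*}\right)^{d}$, which differs from $\sqrt{q_0^*}$ by a sign depending on $d$ and $p\bmod 4$; for instance over $\bF_{25}$ the sum is $-5$, not $+5$. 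Your product therefore carries an extra sign $\left((-1)^{d-1}(\sqrt{p^*})^{d}/\sqrt{q_0^*}\right)^{r}$ that you have silently dropped. Your computation nevertheless lands on the displayed formula only because the lemma as transplanted into this paper omits the same factor (as stated it already fails for $q_0=25$, $s=r=1$, $H=(1)$). The omission happens to be harmless downstream --- in the case $d'=2d$ the sign of $T(\al,\be)$ is re-derived from the Artin--Schreier point count in Lemma \ref{reduce num} rather than read off from $\eta_0(\Delta)$ --- but a correct proof of part (i) must either retain this sign or restrict to $d=1$, and you should flag the misquotation of the ``classical'' Gauss sum rather than reproduce it.
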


In this correspondence we always assume $d=\gcd(n,k)$. Then the
field $\bF_q$ is a vector space over $\bF_{q_0}$ with dimension $s$.
We fix a basis $v_1,\cdots,v_s$ of $\bF_q$ over $\bF_{q_0}$. Then
each $x\in \bF_q$ can be uniquely expressed as
\[x=x_1v_1+\cdots+x_sv_s\quad (x_i\in \bF_{q_0}).\]
Thus we have the following $\bF_{q_0}$-linear isomorphism:
\[\bF_q\xrightarrow{\sim}\bF_{q_0}^s,\quad x=x_1v_1+\cdots+x_sv_s\mapsto
X=(x_1,\cdots,x_s).\] With this isomorphism, a function $f:\bF_q\ra
\bF_{q_0}$ induces a function $F:\bF_{q_0}^s\ra \bF_{q_0}$ where for
$X=(x_1,\cdots,x_s)\in \bF_{q_0}^s, F(X)=f(x)$ with
$x=x_1v_1+\cdots+x_sv_s$. In this way, function
$f(x)=\Tra_{d}^n(\gamma x)$ for $\gamma\in \bF_q$ induces a linear
form \begin{equation} F(X)=\Tra_{d}^n(\gamma
x)=\sum\limits_{i=1}^{s}\Tra_{d}^n(\gamma v_i)x_i=A_{\ga}X^T
\end{equation}\label{def A_gamma}
 where $A_{\ga}=\left(\Tra_{d}^n(\gamma
v_1),\cdots,\Tra_{d}^n(\gamma v_s)\right),$
 and
$f_{\alpha,\beta}(x)=\Tra_d^n(\alpha x^{p^{3k}+1}+\beta x^{p^k+1})$
for $\al,\be\in \bF_q$ induces a quadratic form

\begin{eqnarray}\label{def H_al be}
F_{\alpha,\beta}(X)&=&\Tra_d^n(\alpha x^{p^{3k}+1}+\beta
x^{p^k+1})\nonumber\\[5mm]
&=&\Tra_{d}^n\left(\alpha\left(\sum\limits_{i=1}^s
x_iv_i^{p^{3k}}\right)\left(\sum\limits_{i=1}^s
x_iv_i\right)+\left(\beta\sum\limits_{i=1}^s
x_iv_i^{p^k}\right)\left(\sum\limits_{i=1}^s
x_iv_i\right)\right)\nonumber\\[2mm]
&=&\sum\limits_{i,j=1}^s\Tra_d^n\left(\al v_i^{p^{3k}}v_j+\be
v_i^{p^k}v_j\right)x_ix_j=XH_{\alpha,\beta}X^T
\end{eqnarray}

 where \[H_{\alpha,\beta}=(h_{ij})_{s\times s}\;\hbox{and}\;
h_{ij}=\frac{1}{2}\Tra_{d}^n\left(\alpha \left(v_i^{p^{3k}}v_j+
v_iv_j^{p^{3k}}\right)+\be \left(v_i^{p^k}v_j+
v_iv_j^{p^k}\right)\right)\;\hbox{for}\;1\leq i,j\leq s.\]

From Lemma \ref{qua},  in order to determine the values of
\[T(\alpha,\beta)=\sum\limits_{x\in \bF_q}\zeta_p^{\Tra_1^n(\alpha x^{p^{3k}+1}+\beta
x^{p^k+1})}=\sum\limits_{X\in
\bF_{q_0}^s}\zeta_p^{\Tra_1^{d}\left(XH_{\alpha,\beta}X^T\right)}\]
and
\[S(\alpha,\beta,\ga)=\sum\limits_{x\in \bF_q}\zeta_p^{\Tra_1^n(\alpha x^{p^{3k}+1}+\beta x^{p^k+1}+\ga
x)}=\sum\limits_{X\in
\bF_p^m}\zeta_p^{\Tra_1^{d}\left(XH_{\alpha,\beta}X^T+A_{\ga}X^T\right)}\quad
(\alpha,\beta,\ga\in \bF_q),\] we need to determine the rank of
$H_{\alpha,\beta}$ over $\bF_{q_0}$ and the solvability of
$\bF_{q_0}$-linear equation $2XH_{\al,\be}+A_{\ga}=0$.

Define $d'=\gcd(n,2k)$. Then an easy observation shows
\begin{equation}\label{rel d d'}
d'=\left\{
\begin{array}{ll}
2d, & \text{if}\; n/d\;\text{is even};\\[1mm]
d, &\text{otherwise.}
\end{array}
\right.
\end{equation}

Now we could determine the possible ranks of $H_{\al,\be}$.
\begin{lemma}\label{rank}
For $\alpha,\beta\in \bF_q$ and $(\al,\be)\neq \{(0,0)\}$, let
$r_{\alpha,\beta}$ be the rank of $H_{\alpha,\beta}$.  Then we have
\begin{itemize}
  \item[(i).] if $d'=d$, then the possible values of $r_{\al,\be}$
  are $s$, $s-1$, $s-2$.
  \item[(ii).] if $d'=2d$, then the possible values of $r_{\al,\be}$
  are $s$, $s-2$, $s-4$, $s-6$.
\end{itemize}
\end{lemma}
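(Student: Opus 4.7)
The plan is to translate $r_{\alpha,\beta}$ into the $\bF_{q_0}$-corank of an explicit linearized polynomial and then bound that corank by twisted-polynomial (Ore ring) arithmetic, gaining an extra factor-of-$2$ parity and a doubled ceiling in case (ii) from an $\bF_{q_1}$-linearity, while handling case (i) by a further structural reduction.

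First I would form the polar bilinear form of $F_{\alpha,\beta}(x)=\Tra_d^n(\alpha x^{p^{3k}+1}+\beta x^{p^k+1})$. Using that $d\mid k$, both the $p^k$- and $p^{3k}$-Frobenius sit in $\Gal(\bF_q/\bF_{q_0})$, and $\Tra_d^n$ is invariant under them. After shifting Frobenius onto $y$ one gets $B(x,y)=\Tra_d^n(y\cdot L(x))$ with
$$
L(x)=\alpha^{p^{-3k}}x^{p^{-3k}}+\alpha x^{p^{3k}}+\beta^{p^{-k}}x^{p^{-k}}+\beta x^{p^k}.
$$
Non-degeneracy of the trace pairing identifies the radical of $B$ with $\ker L$, so $r_{\alpha,\beta}=s-\dim_{\bF_{q_0}}\ker L$. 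Applying $p^{3k}$-Frobenius gives the equivalent
$$
\widetilde L(x)=\alpha x+\beta^{p^{2k}}x^{p^{2k}}+\beta^{p^{3k}}x^{p^{4k}}+\alpha^{p^{3k}}x^{p^{6k}},
$$
with $\ker\widetilde L=\ker L$. Writing $\psi$ for the $p^{2k}$-Frobenius, $\widetilde L=\alpha+\beta^{p^{2k}}\psi+\beta^{p^{3k}}\psi^2+\alpha^{p^{3k}}\psi^3$ is a twisted polynomial of $\psi$-degree at most $3$ in the Ore ring of $\bF_q$ relative to $\psi$.

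In case (ii) with $d'=2d$ and $s$ even, $\psi$ has order $s/2$ and fixes $\bF_{q_1}=\bF_{p^{2d}}$; since $d'\mid 2k$, $\widetilde L$ is actually $\bF_{q_1}$-linear, so $\ker\widetilde L$ is a $\bF_{q_1}$-subspace and $\dim_{\bF_{q_0}}\ker\widetilde L$ is even. Combined with the Ore-ring degree bound $\dim_{\bF_{q_1}}\ker\widetilde L\le 3$ in $\bF_q[\psi]/(\psi^{s/2}-1)$, this forces $\dim_{\bF_{q_0}}\ker\widetilde L\in\{0,2,4,6\}$, hence $r_{\alpha,\beta}\in\{s,s-2,s-4,s-6\}$. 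In case (i) with $d'=d$ and $s$ odd, $\gcd(s,2)=1$, so $\psi$ still has order $s$ and generates $\Gal(\bF_q/\bF_{q_0})$; the naive degree bound alone now gives $\dim_{\bF_{q_0}}\ker\widetilde L\le 3$. To sharpen this to $\le 2$ I would use the self-adjoint decomposition $L=M+M^*$ with $M(x)=\alpha x^{p^{3k}}+\beta x^{p^k}$ and the low-degree Ore-ring factorizations $M=\phi N$, $M^*=\phi^{-3}N^*$ with
$$
N(x)=\beta^{p^{-k}}x+\alpha^{p^{-k}}x^{p^{2k}},\qquad N^*(x)=\alpha x+\beta^{p^{2k}}x^{p^{2k}},
$$
each of $\phi$-degree $2$. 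The defining identity $N(x)^{p^{4k}}=-N^*(x)$ on $\ker L$, together with the sharp bounds $\dim_{\bF_{q_0}}\ker N\le 1$ and $\dim_{\bF_{q_0}}\ker N^*\le 1$ (coming here from $\gcd(p^{2k}-1,p^n-1)=q_0-1$), supplemented by a sign obstruction of the kind already visible when $\beta=0$, is intended to rule out corank $3$, giving $r_{\alpha,\beta}\in\{s,s-1,s-2\}$.

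The main obstacle is precisely this final sharpening. The ceiling $\dim\le 6$ and the even parity in case (ii) are essentially formal, depending only on the $\psi$-degree and the $\bF_{q_1}$-linearity. Excluding $\dim=3$ in case (i) genuinely uses that $L$ is the polarization of a specific two-term Dembowski--Ostrom quadratic, which propagates through the factorization $L=\phi N+\phi^{-3}N^*$ into the relation between $N$ and $N^*$ and the accompanying sign constraint. The complementary statement that every listed rank is actually attained is postponed to subsequent sections, where it will be verified by specialization (e.g.\ $\alpha=0$ or $\beta=0$) and by the moment identities on $T(\alpha,\beta)$ and the Artin--Schreier curve bounds announced in the introduction.
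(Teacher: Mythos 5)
Your reduction of $r_{\alpha,\beta}$ to the $\bF_{q_0}$-corank of the linearized polynomial is exactly the paper's (your $\widetilde L$ is, after the $p^{3k}$-Frobenius twist, the paper's $\phi_{\alpha,\beta}(x)=\alpha^{p^{3k}}x^{p^{6k}}+\beta^{p^{3k}}x^{p^{4k}}+\beta^{p^{2k}}x^{p^{2k}}+\alpha x$), and your treatment of case (ii) is in substance the same argument as Appendix A: the kernel is an $\bF_{p^{2d}}$-subspace of $\bF_q$ of dimension at most $3$, hence of even $\bF_{q_0}$-dimension at most $6$. You reach the bound $3$ via the twisted-polynomial degree bound for a generator $\psi$ of $\Gal(\bF_q/\bF_{p^{2d}})$ (note this needs $\gcd(k/d,s/2)=1$, which follows from $\gcd(n,2k)=2d$), where the paper instead passes to the $3$-dimensional $\bF_{p^{2k}}$-root space in the algebraic closure and invokes Trachtenberg's lemma; these are equivalent, and your formulation has the mild advantage of not needing the separability remark (which in the paper is stated only for $\alpha\neq 0$). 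So case (ii) is fine.

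Case (i), however, contains a genuine gap. The same kernel argument with $\gcd(n,2k)=d$ only gives $\dim_{\bF_{q_0}}\ker\widetilde L\le 3$, i.e. $r_{\alpha,\beta}\in\{s,s-1,s-2,s-3\}$, and the whole content of part (i) is the exclusion of corank $3$. Your proposed route --- the decomposition $L=M+M^*$, the factorizations through $N$ and $N^*$ with $\dim\ker N,\dim\ker N^*\le 1$, and the identity $N(x)^{p^{4k}}=-N^*(x)$ on $\ker L$ --- does not by itself yield a contradiction from a $3$-dimensional kernel: the map $x\mapsto N(x)$ restricted to $\ker L$ has kernel of dimension at most $1$, so its image could still be $2$-dimensional, and nothing you have written rules that out. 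The decisive step is delegated to ``a sign obstruction of the kind already visible when $\beta=0$,'' which is never formulated, and you yourself flag this as the main obstacle. To be fair, the paper also does not prove (i) here --- it cites Zeng--Hu--Jiang--Yue--Cao, where the exclusion of corank $3$ is carried out (essentially by combining the value $|T(\alpha,\beta)|^2=q\cdot q_0^{\,\mathrm{corank}}$ forced by Lemma \ref{qua} with moment/divisibility constraints rather than by a purely kernel-theoretic argument). As it stands your write-up establishes (ii) but only the weaker statement $r_{\alpha,\beta}\ge s-3$ in case (i).
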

\begin{proof}
For (i), see \cite{Zen Hu Jia Yue Cao}. For (ii), see
\textbf{Appendix A.}
\end{proof}

 In order to determine the value distribution of
$T(\al,\be)$
 for $\al,\be\in \bF_q$, we need the
following result on moments of $T(\al,\be)$.
\begin{lemma}\label{moment}
For the exponential sum $T(\al,\be)$,
\[\begin{array}{ll}&(i). \;\;\sum\limits_{\al,\be\in
\bF_q}T(\al,\be)=p^{2n};\\[2mm]
                   &(ii).
                   \sum\limits_{\al,\be\in
\bF_q}T(\al,\be)^2=\left\{\begin{array}{ll}
p^{2n} &\text{if}\; d'=d\;\text{and}\;\; p^d\equiv 3\pmod 4,\\[1mm]
(2p^n-1)\cdot
p^{2n}&\text{if}\; d'=d\;\text{and}\; p^d\equiv 1\pmod 4,\\[1mm]
(p^{n+d}+p^n-p^d)\cdot p^{2n}&\text{if}\; d'=2d;
\end{array}\right.\\[2mm]
                   &(iii). \;\;\text{if}\;d'=2d,\;\text{then}\\[2mm]
                   &\quad\quad\quad \sum\limits_{\al,\be\in
\bF_q}T(\al,\be)^3=(p^{n+3d}+p^n-p^{3d})\cdot p^{2n}.\\[4mm]
\end{array}\]
\end{lemma}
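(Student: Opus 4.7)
The plan is to compute each moment by expanding $T(\alpha,\beta)^r$ as an $r$-fold sum over $\bF_q$, swapping with the sum $\sum_{\alpha,\beta\in\bF_q}$, and invoking the orthogonality relation $\sum_{\gamma\in\bF_q}\zeta_p^{\Tra_1^n(\gamma c)}=q\,\delta_{c=0}$ twice. This produces
\[
\sum_{\alpha,\beta\in\bF_q} T(\alpha,\beta)^r \;=\; q^2\cdot N_r,\qquad N_r:=\#\Bigl\{(x_1,\ldots,x_r)\in\bF_q^r\;:\;\textstyle\sum_{i} x_i^{p^{3k}+1}=0,\;\sum_{i} x_i^{p^k+1}=0\Bigr\}.
\]
Part (i) is immediate since only $x_1=0$ contributes, giving $N_1=1$.

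For part (ii) I would partition $N_2$ according to $y=0$ or $y\neq 0$. The case $y=0$ forces $x=0$ and contributes $1$. When $y\neq 0$, the substitution $z=x/y$ reduces the system to the single equation $z^{p^k+1}=-1$; the companion condition $z^{p^{3k}+1}=-1$ is automatic because $p^{3k}+1=(p^k+1)(p^{2k}-p^k+1)$ and the second factor is odd. The equation $z^{p^k+1}=-1$ has $0$ or $e:=\gcd(p^k+1,q-1)$ solutions in $\bF_q^*$ according to whether $-1$ is an $e$-th power. A short gcd argument (using $v_2(n)$ vs.\ $v_2(k)$) yields $e=2$ when $d'=d$ and $e=p^d+1$ when $d'=2d$. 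In the first subcase $-1$ is a square iff $4\mid q-1$, equivalently $p^d\equiv 1\pmod 4$ (since $n/d$ is odd); in the second subcase $-1$ is always a $(p^d+1)$-th power, because $2(p^d+1)\mid p^{2d}-1\mid p^n-1$. Assembling the three subcases gives the formulas in (ii).

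For part (iii), split $N_3$ again by whether $z=0$. The $z=0$ slice equals the value of $N_2$ already computed for $d'=2d$. For $z\neq 0$, the rescaling $u=x/z$, $v=y/z$ (legitimate because both defining equations are homogeneous) produces $(q-1)$ copies of
\[
M\;:=\;\#\Bigl\{(u,v)\in\bF_q^2\;:\;u^{p^{3k}+1}+v^{p^{3k}+1}=-1,\;u^{p^k+1}+v^{p^k+1}=-1\Bigr\},
\]
so $N_3=N_2+(q-1)M$. A direct arithmetic check reduces the target identity $\sum T^3=(p^{n+3d}+p^n-p^{3d})\,p^{2n}$ to $M=p^{3d}-p^d$; this count is the heart of the proof and the main obstacle. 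My proposed route is to set $C=u^{p^k+1}$, $D=v^{p^k+1}$, to observe that $u^{p^{3k}+1}=C^{\,m}$ with $m=p^{2k}-p^k+1$, and thereby reduce the problem to counting pairs $(C,D)$ satisfying
\[
C+D=-1,\qquad C^m+D^m=-1,
\]
with $C,D$ constrained to $\{0\}\cup\langle\pi^{p^d+1}\rangle$, the image of $\phi(u)=u^{p^k+1}$, and each nonzero value contributing a constant fibre of size $p^d+1$. The remaining task is to count the admissible pairs, equivalently the number of $C$ in the relevant subgroup at which the polynomial $C^m+(-1-C)^m+1$ vanishes and $-1-C$ also lies in the subgroup; for this step I would appeal to the Artin--Schreier curve methods used in \cite{Luo Fen} and \cite{Luo Tan Wan} to handle the rational-point count of the resulting explicit curve over $\bF_q$.
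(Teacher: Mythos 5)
Your reduction $\sum_{\al,\be}T(\al,\be)^r=q^2N_r$ and your treatment of parts (i) and (ii) are correct and essentially the paper's own argument. In (ii) your observation that $z^{p^{3k}+1}=-1$ is forced by $z^{p^k+1}=-1$ (since $p^{3k}+1=(p^k+1)(p^{2k}-p^k+1)$ with odd cofactor) is a mild streamlining: the paper instead first deduces $z^{p^{2k}-1}=1$, hence $z\in\bF_{p^{d'}}^*$, and then reduces the system to $z^2=-1$ (when $d'=d$) or $z^{p^d+1}=-1$ (when $d'=2d$). Either way the counts agree.

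The genuine gap is in (iii), at precisely the point you yourself flag as the heart of the proof: you never establish $M=p^{3d}-p^d$. The substitution $C=u^{p^k+1}$, $D=v^{p^k+1}$ converts the problem into counting solutions of $C+D=-1$, $C^m+D^m=-1$ with $m=p^{2k}-p^k+1$, subject to $C$, $D$ lying in $\{0\}$ union the index-$(p^d+1)$ subgroup; but this is not visibly an Artin--Schreier curve $y^{p^d}-y=f(x)$, which is the only shape the methods of \cite{Luo Fen}, \cite{Luo Tan Wan} (and Lemma \ref{Artin} here) handle. The exponent $m$ is enormous, and the subgroup constraint on both $C$ and $-1-C$ leaves the count no more tractable than the original one, so your plan does not close. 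The paper's actual device is an elimination step you are missing: from $u^{p^{3k}+1}+1=-v^{p^{3k}+1}$ and $u^{p^k+1}+1=-v^{p^k+1}$ one gets $\left(u^{p^{3k}+1}+1\right)^{p^k+1}=\left(u^{p^{k}+1}+1\right)^{p^{3k}+1}$, which factors as $(u^{p^{4k}}-u)(u^{p^k}-u^{p^{3k}})=0$. Hence $u$ (and symmetrically $v$) lies in $\bF_{p^{2d}}$ or $\bF_{p^{4d}}$, and on that small field both defining equations collapse to the single Hermitian equation $u^{p^d+1}+v^{p^d+1}+1=0$, whose affine point count is $p^{3d}-p^d$ in either case (maximal curve over $\bF_{p^{2d}}$ when $n/d\equiv 2\pmod 4$, minimal over $\bF_{p^{4d}}$ when $n/d\equiv 0\pmod 4$). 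Some such elimination, or an equally explicit substitute for it, is required to complete your part (iii).
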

\begin{proof}
see  \textbf{Appendix A.}
\end{proof}

In the case $d'=2d$, we could determine the explicit values of
$T(\al,\be)$. To this end we will study a class of Artin-Schreier
curves. A similar technique has been applied in Coulter \cite{Coul1},
Theorem 6.1.

\begin{lemma}\label{Artin}Suppose $(\al,\be)\in (\bF_{q}\times
\bF_q)\big{\backslash}\{0,0\}$ and $d'=2d$. Let $N$ be the number of
$\bF_q$-rational (affine) points on the curve
\begin{equation}\label{Artin Sch}
\al x^{p^{3k}+1}+\be x^{p^k+1}=y^{p^d}-y.
\end{equation}
 Then
\[N=q+(p^d-1)\cdot T(\al,\be).\]
\end{lemma}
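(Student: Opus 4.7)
The plan is to count $N$ by fixing $x$ and counting $y$'s. For each $x \in \bF_q$, set $c_x = \al x^{p^{3k}+1} + \be x^{p^k+1}$; by Artin--Schreier theory the equation $y^{p^d} - y = c_x$ has $p^d$ solutions $y \in \bF_q$ when $\Tra_d^n(c_x) = 0$, and no solutions otherwise. Using the orthogonality of additive characters on $\bF_{p^d}$, one writes
\[
\#\{y\in \bF_q : y^{p^d}-y = c_x\} = \sum_{\lambda \in \bF_{p^d}} \zeta_p^{\Tra_1^d(\lambda \Tra_d^n(c_x))} = \sum_{\lambda \in \bF_{p^d}} \zeta_p^{\Tra_1^n(\lambda c_x)},
\]
the last equality using transitivity of the trace together with $\lambda \in \bF_{p^d}$.

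Summing over $x$ and swapping the two sums, I would obtain
\[
N = \sum_{\lambda \in \bF_{p^d}} \sum_{x\in\bF_q} \zeta_p^{\Tra_1^n\left(\lambda\al\,x^{p^{3k}+1} + \lambda\be\,x^{p^k+1}\right)} = \sum_{\lambda \in \bF_{p^d}} T(\lambda\al,\lambda\be) = q + \sum_{\lambda \in \bF_{p^d}^*} T(\lambda\al,\lambda\be),
\]
using $T(0,0) = q$. Thus the claim reduces to showing $T(\lambda\al,\lambda\be) = T(\al,\be)$ for every $\lambda \in \bF_{p^d}^*$; this is the step where the hypothesis $d' = 2d$ is essential.

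To prove this invariance, I would use the change of variable $x \mapsto \mu x$ for a suitable $\mu \in \bF_q^*$, which yields $T(\al,\be) = T(\mu^{p^{3k}+1}\al,\mu^{p^k+1}\be)$. The goal is to realize any prescribed $\lambda \in \bF_{p^d}^*$ as both $\mu^{p^{3k}+1}$ and $\mu^{p^k+1}$ simultaneously. Under the hypothesis $d' = 2d$, writing $n = ad$ and $k = bd$ with $\gcd(a,b)=1$ forces $a$ even and $b$ odd, so that $\bF_{p^{2d}} \subseteq \bF_q$ and, for $\mu \in \bF_{p^{2d}}$, both $p^k$ and $p^{3k}$ act on $\mu$ as $p^d$ does (since $k \equiv 3k \equiv d \pmod{2d}$ in the exponent, modulo $p^{2d}-1$). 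Hence for such $\mu$ one has $\mu^{p^{3k}+1} = \mu^{p^k+1} = \mu^{p^d+1} = N_{\bF_{p^{2d}}/\bF_{p^d}}(\mu)$. Since the norm map $N_{\bF_{p^{2d}}/\bF_{p^d}} : \bF_{p^{2d}}^* \to \bF_{p^d}^*$ is surjective, every $\lambda \in \bF_{p^d}^*$ arises in this way, yielding $T(\lambda\al,\lambda\be) = T(\al,\be)$ and completing the proof.

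The main obstacle is the last paragraph: verifying the arithmetic that under $d' = 2d$ one really has both $p^k \equiv p^d$ and $p^{3k} \equiv p^d \pmod{p^{2d}-1}$, so that the Frobenius exponents collapse uniformly on $\bF_{p^{2d}}$. Everything else (Artin--Schreier counting, orthogonality, the index shift) is routine; the entire weight of the lemma rests on this elementary but delicate divisibility analysis coupled with norm surjectivity.
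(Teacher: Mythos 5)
Your proof is correct, and it reaches the key invariance $T(\lambda\al,\lambda\be)=T(\al,\be)$ by a genuinely different mechanism than the paper. The first half is the same computation in different clothing: the paper writes $qN$ as a triple character sum over $\omega\in\bF_q$ and observes that the $y$-sum vanishes unless $\omega\in\bF_{p^d}$, while you invoke Artin--Schreier solvability and orthogonality over $\bF_{p^d}$ directly; both land on $N=q+\sum_{\lambda\in\bF_{p^d}^*}T(\lambda\al,\lambda\be)$. For the invariance itself, the paper uses the quadratic-form identity $T(\omega\al,\omega\be)=\eta_0(\omega)^{r_{\al,\be}}T(\al,\be)$ from (\ref{om rel}) together with Lemma \ref{rank}(ii), which guarantees $r_{\al,\be}$ is even when $d'=2d$. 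You instead substitute $x\mapsto\mu x$ with $\mu\in\bF_{p^{2d}}^*\subseteq\bF_q^*$, note that $d'=2d$ forces $k/d$ and $3k/d$ odd so that $p^{k}\equiv p^{3k}\equiv p^{d}\pmod{p^{2d}-1}$, hence $\mu^{p^{3k}+1}=\mu^{p^k+1}$ equals the norm of $\mu$ from $\bF_{p^{2d}}$ to $\bF_{p^d}$, and finish by surjectivity of that norm. Your divisibility analysis checks out: with $n=ad$, $k=bd$, $\gcd(a,b)=1$, one has $d'=d\gcd(a,2)$, so $d'=2d$ forces $a$ even and $b$ odd; the paper itself relies on exactly this congruence in the proof of Lemma \ref{reduce num}. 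The trade-off is that the paper's route reuses machinery it needs anyway (the rank parity and the diagonalized quadratic form), whereas yours is self-contained and elementary, requiring neither Lemma \ref{rank}(ii) nor the quadratic-form setup.
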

\begin{proof}
see \textbf{Appendix A.}
\end{proof}

Now we give an explicit evaluation of $T(\al,\be)$ in the case
$d'=2d$.
\begin{lemma}\label{reduce num}
Assumptions as in Lemma \ref{Artin} and let $n=2m$, then
\[
T(\al,\be)=\left\{
\begin{array}{ll}
\mu p^{m}, &\text{if}\; r_{\al,\be}=s\\[2mm]
-\mu p^{m+d}, &\text{if}\; r_{\al,\be}=s-2\\[2mm]
\mu p^{m+2d}, &\text{if}\; r_{\al,\be}=s-4\\[2mm]
-\mu p^{m+3d}, &\text{if}\; r_{\al,\be}=s-6.
\end{array}
\right.
\]
where $\mu=(-1)^{m/d}$.
\end{lemma}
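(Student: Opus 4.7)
The plan is to combine the magnitude information from Lemmas \ref{qua} and \ref{rank} with the sign information from Lemma \ref{Artin} and the moment identities of Lemma \ref{moment}.  By Lemma \ref{rank}(ii), when $d'=2d$ the rank $r_{\al,\be}$ is even and lies in $\{s,s-2,s-4,s-6\}$.  Writing $j=(s-r_{\al,\be})/2\in\{0,1,2,3\}$, Lemma \ref{qua}(i) gives $T(\al,\be)$ as a real integer of absolute value $q_0^{s-r_{\al,\be}/2}=p^{m+jd}$.  Since $r_{\al,\be}/2=m/d-j$ and $\mu=(-1)^{m/d}$, a short check of the two subcases $q_0\equiv 1,3\pmod 4$ in Lemma \ref{qua}(i) reduces the lemma to the single sign assertion
\[
\epsilon_{\al,\be}\;:=\;T(\al,\be)/p^{m+jd}\;=\;\mu(-1)^j.
\]

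To pin down the sign I would first exploit Lemma \ref{Artin}: the non-negativity $N=q+(p^d-1)T(\al,\be)\geq 0$ of the Artin--Schreier point count yields $T(\al,\be)\geq -q/(p^d-1)$, which already excludes the wrong sign in the extremal rank regime (those $(s,j)$ with $p^{m+jd}(p^d-1)>q$), handling in particular the degenerate rank-zero case $(\al,\be)=(0,0)$.  For the remaining strata I would combine two further inputs: (a) a count, for each $j$, of the number $a_j$ of pairs $(\al,\be)\in\bF_q^2$ with $r_{\al,\be}=s-2j$, obtained by analysing the $\bF_q$-kernel of the linearized polynomial
\[
L_{\al,\be}(y)=\al^{p^{-3k}}y^{p^{-3k}}+\be^{p^{-k}}y^{p^{-k}}+\be y^{p^k}+\al y^{p^{3k}},
\]
which realizes the radical of the bilinear form attached to $f_{\al,\be}$; and (b) the three power-sum identities $\sum_{\al,\be}T(\al,\be)^i$ for $i=1,2,3$ from Lemma \ref{moment}.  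Substituting the only candidate values $\pm p^{m+jd}$ into the moment identities yields an overdetermined linear system for the sign multiplicities $n_j^{\pm}:=\#\{(\al,\be):T(\al,\be)=\pm p^{m+jd}\}$, whose only nonnegative-integer solution consistent with $n_j^+ + n_j^- = a_j$ is the rigid assignment dictated by $\mu(-1)^j$.

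The hardest step will be establishing this rigidity, namely that $\epsilon_{\al,\be}$ genuinely depends only on $r_{\al,\be}$ and not on any finer invariant of $H_{\al,\be}$; equivalently, that $\eta_0(\Delta_{\al,\be})$ is constant on each rank stratum.  I expect this to follow from the hypothesis $d'=2d$, which forces $s$ even and $k/d$ odd, and which equips the radical of $H_{\al,\be}$ (via $L_{\al,\be}$) with an extra $\bF_{q_0}$-module symmetry that pins down the discriminant class of a complementary non-degenerate part.  Computing the rank counts $a_j$ precisely is the main technical subcomputation, and I would carry it out by tracking the degrees of the irreducible $\bF_{q_0}$-factors of $L_{\al,\be}$ as $(\al,\be)$ varies, mirroring the Artin--Schreier curve analysis already used in Lemma \ref{Artin}.
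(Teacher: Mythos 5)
There is a genuine gap: the sign determination, which is the entire content of the lemma, is never actually carried out. Your reduction to the claim $\epsilon_{\al,\be}=\mu(-1)^j$ is correct, but the tools you then invoke do not suffice. The non-negativity $N\geq 0$ gives only $T(\al,\be)\geq -q/(p^d-1)$, which excludes no sign for generic parameters. The moment identities cannot close the argument either: once the stratum sizes $a_j=n_{2j}$ are known, the second moment carries no sign information, and only the first and third moments remain --- two linear equations in the four unknowns $n_j^+-n_j^-$ --- so the system is underdetermined; moreover the whole scheme presupposes the ``rigidity'' that the sign is constant on each rank stratum, which you explicitly leave unproved (``I expect this to follow\dots''). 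That rigidity is precisely what must be shown, and it must be shown for each individual pair $(\al,\be)$. Note also that the paper runs the logic in the opposite direction: it uses the present lemma (pointwise values of $T$) together with the moments to \emph{deduce} the stratum counts in Theorem \ref{value dis T}(ii); your plan to compute the $a_j$ first by factoring the linearized polynomial is a substantial unexecuted computation that the paper never needs.

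The missing idea is a pointwise congruence extracted from Lemma \ref{Artin}. Since $3k/d$ and $k/d$ are both odd when $d'=2d$, one has $p^{3k}+1\equiv p^k+1\equiv p^d+1\pmod{p^{2d}-1}$, so the map $(x,y)\mapsto(tx,t^{p^d+1}y)$ with $t^{p^{2d}-1}=1$ permutes the affine points of the curve with $x\neq 0$ in orbits of size $p^{2d}-1$, while the points with $x=0$ number exactly $p^d$. Hence $N\equiv p^d\pmod{p^{2d}-1}$, and $q+(p^d-1)T(\al,\be)\equiv p^d$ yields $T(\al,\be)\equiv 1\pmod{p^d+1}$ for every $(\al,\be)\neq(0,0)$. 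Combined with $T(\al,\be)=\epsilon p^{m+jd}$ and $p^{m+jd}\equiv(-1)^{m/d+j}\pmod{p^d+1}$, this forces $\epsilon=(-1)^{m/d+j}=\mu(-1)^j$, which is the assertion. This single congruence replaces all of your counting machinery and settles the sign pair by pair.
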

\begin{proof}
Consider the $\bF_q$-rational (affine) points on the Artin-Schreier
curve in Lemma \ref{Artin}. It is easy to verify that $(0,y)$ with
$y\in \bF_{p^d}$ are exactly the points on the curve with $x=0$. If
$(x,y)$ with $x\neq 0$ is a point on this curve, then so are $(t
x,t^{p^d+1}y)$ with $t^{p^{2d}-1}=1$ (note that $p^{3k}+1\equiv
p^k+1\equiv p^d+1\pmod {p^{2d}-1}$ since $3k/d$ and $k/d$ are both
odd by (\ref{rel d d'})). In total, we have
\[q+(p^d-1)T(\al,\be)=N\equiv p^d\pmod {p^{2d}-1}\]
which yields
\[T(\al,\be)\equiv 1\pmod{p^d+1}.\]

 We only consider the case $r_{\al,\be}=s$ and $m/d$ is odd. The other cases are similar. In this case $T(\al,\be)=\pm
 p^{m}$. Assume $T(\al,\be)=p^{m}$.
 Then ${p^d+1}\mid p^{m}-1$ which contradicts to $m/d$ is odd. Therefore $T(\al,\be)=-p^{m}$.
\end{proof}
\begin{remark}
(i). Our treatment improve the technique in \cite{Coul1}, \cite{Coul2}, in which
the case $(p,d)=(3,1)$ is dealt with in a different manner.

(ii). Applying Lemma \ref{reduce num} to Lemma \ref{Artin}, we could
determine the number of rational points on the curve (\ref{Artin
Sch}).
\end{remark}

\section{Exponential Sums $T(\al,\be)$ and Cyclic Code $\cC_1$}

\quad Define
\[N_{i}=\left\{(\al,\be)\in
\bF_{q}\times
\bF_q\backslash\{(0,0)\}\left|r_{\al,\be}=s-i\right.\right\}.
\] Then $n_i=\big{|}N_i\big{|}$ for $i=0,2,4,6$.

According to the possible values of $T(\al,\be)$ given by Lemma
\ref{qua} (setting $F(X)=XH_{\al,\be}X^T=\Tra_{d}^n(\al
x^{p^{3k}+1}+\be x^{p^k+1})$), we define that for $\varepsilon=\pm
1$,
\[N_{\varepsilon,i}=\left\{
\begin{array}{ll}
&\left\{(\al,\be)\in
\bF_q^2\backslash\{(0,0)\}\left|T(\al,\be)=\veps
p^{\frac{n+id}{2}} \right.\right\} \qquad\text{if}\; n+id\;\text{is even}\;,\\[3mm]
&\left\{(\al,\be)\in
\bF_q^2\backslash\{(0,0)\}\left|T(\al,\be)=\veps\sqrt{p^*}
p^{\frac{n+id-1}{2}} \right.\right\} \qquad\text{if}\;
n+id\;\text{is odd}
\end{array}
\right.\] where $p^*=(-1)^{\frac{p-1}{2}}p$  and
$n_{\veps,i}=|N_{\veps,i}|$.

Recall $q_0^*=(-1)^{\frac{q_0-1}{2}}q_0$.
 In this section we prove the following results.

\begin{theo}\label{value dis T}
The value distribution of the multi-set
$\left\{T(\al,\be)\left|\al,\be\in \bF_q\right.\right\}$ is shown as
following.

(i). For the case $d'=d$,
\begin{center}
\begin{tabular}{|c|c|}
\hline
values & multiplicity \\[2mm]
\hline $\sqrt{q_0^*}{q_0}^{\frac{s-1}{2}},
-\sqrt{q_0^*}{q_0}^{\frac{s-1}{2}}$&$\frac{1}{2}p^{2d}(p^n-p^{n-d}-p^{n-2d}+1)({p^n-1})/({p^{2d}-1})$
\\[2mm]
\hline
${p}^{\frac{n+d}{2}}$&$\frac{1}{2}p^{\frac{n-d}{2}}(p^{\frac{n-d}{2}}+1)(p^n-1)$
\\[2mm]
\hline

$-{p}^{\frac{n+d}{2}}$&$\frac{1}{2}{p}^{\frac{n-d}{2}}({p}^{\frac{n-d}{2}}-1)(p^n-1)$
\\[2mm]
\hline $\sqrt{{q_0}^*}{q_0}^{\frac{s+1}{2}},
-\sqrt{{q_0}^*}{q_0}^{\frac{s+1}{2}}$&$\frac{1}{2}(p^n-1)(p^{n-d}-1)/({p^{2d}-1})$
\\[2mm]

\hline $p^n$&$1$
\\[2mm]
\hline
\end{tabular}
\end{center}

(ii). For the case $d'=2d$,
\begin{center}
\begin{tabular}{|c|c|}
\hline
values & multiplicity \\[2mm]
\hline
$\mu p^m$&
$
\frac{(p^n-1)\left(p^{n+6d}-p^{n+4d}-p^{n+d}+\mu p^{m+5d}-\mu p^{m+4d}+p^{6d}\right)}{(p^d+1)(p^{2d}-1)(p^{3d}+1)}
$
\\[2mm]
\hline
$-\mu {p}^{m+d}$&
$
\frac{(p^n-1)(p^{n+3d}+p^{n+2d}-p^n-p^{n-d}-p^{n-2d}-\mu p^{m+3d}+\mu p^{m}+p^{3d})}{(p^d+1)^2(p^{2d}-1)}
$
\\[2mm]
\hline
 $\mu {p}^{m+2d}$&$
 \frac{
(p^{m-d}+\mu)(p^{m+d}+p^m-p^{m-2d}-\mu p^d)(p^{n}-1)}{(p^d+1)^3(p^{d}-1)}
$
\\[2mm]
\hline $-\mu {p}^{m+3d}$&$
\frac{(p^{m-2d}-\mu)(p^{m-d}+\mu)(p^{n}-1)}{(p^d+1)(p^{2d}-1)(p^{3d}+1)}
$
\\[2mm]

\hline $p^n$&$1$
\\[2mm]
\hline
\end{tabular}
\end{center}
where $\mu=(-1)^{m/d}$.
\end{theo}

\begin{proof}
see \cite{Zen Hu Jia Yue Cao} for (i) and \textbf{Appendix B} for
(ii).
\end{proof}
\begin{remark}

\begin{itemize}
\item[(i).] In the case $k\in\left\{\frac{n}{4},\frac{n}{2},\frac{3n}{4}\right\}$, the exponential sum $T(\al,\be)=\sum\limits_{x\in
\bF_q}\chi((\al^{p^k}+\be)x^{p^{k}+1})$ has been extensively studied, for example, see
\cite{Coul1}, \cite{Lid Nie}.
\item[(ii).] In the case $k\in\left\{\frac{n}{6},\frac{5n}{6}\right\}$, the exponential sum $T(\al,\be)=\sum\limits_{x\in \bF_q}\chi(\al x^{p^{m}+1}+\be x^{p^{k}+1})$
is a special case in \cite{Luo Tan Wan}. In the case $k\in \{\frac{n}{3},\frac{2n}{3}\}$, the exponential sum
 $T(\al,\be)=\sum\limits_{x\in \bF_q}\chi(\al x^{2}+\be x^{p^{k}+1})$
is a special case in \cite{Luo Fen}.
\end{itemize}
\end{remark}

Recall that $t$ is a divisor of $d$ and $\cC_1$ is the cyclic code
over $\bF_{p^t}$ with parity-check polynomial $h_2(x)h_3(x)$ where
$h_2(x)$ and $h_3(x)$ are the minimal polynomials of
$\pi^{-(p^{3k}+1)}$ and $\pi^{-(p^k+1)}$, respectively.
\begin{theo}\label{wei dis C1}
Suppose that $1\leq k\leq n-1$ and $k\notin\left\{\frac{n}{4},\frac{n}{2},\frac{3n}{4}\right\}$. Then the weight distribution
$\{A_0,A_1,\cdots,A_{q-1}\}$ of the cyclic code $\cC_1$ over
$\bF_{p^t}$ ($p\geq 3$) with length $q-1$ is shown as following.

(i). For the case $d=d'$ and $d/t$ are both odd, then
$\mathrm{dim}_{\bF_{p^t}}\cC_1=2n_0$ and $A_i=0$ except for
\begin{center}
\begin{tabular}{|c|c|}
\hline
$i$ & $A_i$ \\[2mm]
\hline
$(p^t-1)(p^{n-t}-p^{\frac{n+d}{2}-t})$&$\frac{1}{2}p^{\frac{n-d}{2}}(p^{\frac{n-d}{2}}+1)({p^n-1})$
\\[2mm]
\hline $(p^t-1)p^{n-t}$&$(p^n-1)(p^n-p^{n-d}+1)$
\\[2mm]
\hline

$(p^t-1)(p^{n-t}+p^{\frac{n+d}{2}-t})$&$\frac{1}{2}p^{\frac{n-d}{2}}(p^{\frac{n-d}{2}}-1)({p^n-1})$\\[2mm]
\hline $0$&$1$\\[1mm]
 \hline
\end{tabular}
\end{center}

(ii). For the case $d=d'$ and $d/t$ is even, then $\mathrm{dim}_{\bF_{p^t}}\cC_1=2n_0$ and $A_i=0$ except for
\begin{center}
\begin{tabular}{|c|c|}
\hline
$i$ & $A_i$ \\[2mm]
\hline
$(p^t-1)(p^{n-t}-p^{\frac{n}{2}+d-t})$&$\frac{1}{2}({p^n-1})(p^{n-d}-1)\big{\slash}(p^{2d}-1)$
\\[2mm]
\hline

$(p^t-1)(p^{n-t}-p^{\frac{n+d}{2}-t})$&$\frac{1}{2}p^{\frac{n-d}{2}}(p^{\frac{n-d}{2}}+1)({p^n-1})$\\[2mm]
\hline

$(p^t-1)(p^{n-t}-p^{\frac{n}{2}-t})$&$\frac{1}{2}p^{2d}(p^n-p^{n-d}-p^{n-2d}+1)({p^n-1})\big{\slash}(p^{2d}-1)$\\[2mm]
\hline

$(p^t-1)(p^{n-t}+p^{\frac{n}{2}-t})$&$\frac{1}{2}p^{2d}(p^n-p^{n-d}-p^{n-2d}+1)({p^n-1})\big{\slash}(p^{2d}-1)$\\[2mm]
\hline

$(p^t-1)(p^{n-t}+p^{\frac{n+d}{2}-t})$&$\frac{1}{2}p^{\frac{n-d}{2}}(p^{\frac{n-d}{2}}-1)({p^n-1})$\\[2mm]

\hline
$(p^t-1)(p^{n-t}+p^{\frac{n}{2}+d-t})$&$\frac{1}{2}({p^n-1})(p^{n-d}-1)\big{\slash}(p^{2d}-1)$
\\[2mm]
\hline $0$&$1$\\[1mm]

 \hline
\end{tabular}
\end{center}

(iii). For the case $d'=2d$ and $k\notin\left\{\frac{n}{6},\frac{5n}{6}\right\}$, then $\mathrm{dim}_{\bF_{p^t}}\cC_1=2n_0$ and $A_i=0$ except for
\begin{center}
\begin{tabular}{|c|c|}
\hline
$i$ & $A_i$ \\[2mm]
\hline
$(p^t-1)(p^{n-t}+\mu p^{m+3d-t})$&$
\frac{(p^{m-2d}-\mu)(p^{m-d}+\mu)(p^{n}-1)}{(p^d+1)(p^{2d}-1)(p^{3d}+1)}
$
\\[2mm]
\hline

$(p^t-1)(p^{n-t}-\mu {p}^{m+2d-t})$&$
 \frac{
(p^{m-d}+\mu)(p^{m+d}+p^m-p^{m-2d}-\mu p^d)(p^{n}-1)}{(p^d+1)^3(p^{d}-1)}
$
\\[2mm]
\hline

$(p^t-1)(p^{n-t}+\mu p^{m+d-t})$&$
\frac{
(p^n-1)(p^{n+3d}+p^{n+2d}-p^n-p^{n-d}-p^{n-2d}-\mu p^{m+3d}+\mu p^{m}+p^{3d})}{(p^d+1)^2(p^{2d}-1)}
$
\\[2mm]
\hline

$(p^t-1)(p^{n-t}-\mu p^{m-t})$&$
\frac{
(p^n-1)(p^{n+6d}-p^{n+4d}-p^{n+d}+\mu p^{m+5d}\mu p^{m+4d}+p^{6d})}{(p^d+1)(p^{2d}-1)(p^{3d}+1)}
$\\[2mm]
\hline
$0$&$1$\\[1mm]
 \hline
\end{tabular}
\end{center}

(iii). For the case $d'=2d$ and $k\in\left\{\frac{n}{6},\frac{5n}{6}\right\}$, then $\mathrm{dim}_{\bF_{p^t}}\cC_1=3n_0/2$ and $A_i=0$ except for
\begin{center}
\begin{tabular}{|c|c|}
\hline
$i$ & $A_i$ \\[2mm]
\hline
$(p^t-1)(p^{n-t}+{p}^{\frac{5n}{6}-t})$&$
(p^n-1)/(p^{\frac{n}{6}}+1)
$
\\[2mm]
\hline
$(p^t-1)(p^{n-t}-p^{\frac{2n}{3}-t})$&$p^{\frac{n}{6}}(p^{\frac{n}{3}}+1)(p^n-1)/(p^{\frac{n}{6}}+1)$
\\[2mm]
\hline

$(p^t-1)(p^{n-t}+p^{\frac{n}{2}-t})$&$
p^{\frac{n}{2}}(p^{\frac{n}{2}}-1)(p^{\frac{2n}{3}}-1)/(p^{\frac{n}{6}}+1)
$\\[2mm]
\hline
$0$&$1$\\[1mm]
 \hline
\end{tabular}
\end{center}
\end{theo}
\begin{proof}
see \textbf{Appendix B.}
\end{proof}
\begin{remark}
\begin{itemize}
    \item[(1).] In the case $d=d'$. Since $\gcd(p^n-1,
p^k+1)=2$, the first $l'=\frac{q-1}{2}$ coordinates of each codeword
of $\cC_1$ form a cyclic code $\cC_1'$ over $\bF_{p^t}$ with length
$l'$ and dimension $2n_0$. Let $(A_0',\cdots,A_{l'}')$ be the weight
distribution of $\cC_1'$, then $A_i'=A_{2i}$ $(0\leq i\leq l')$.
    \item[(2).] In the case $d'=2d$. Since $\gcd(p^{n}-1,
p^k+1)=p^d+1$, the first $l'=\frac{q-1}{p^d+1}$ coordinates of each
codeword of $\cC_1$ form a cyclic code $\cC_1'$ over $\bF_{p^t}$
with length $l'$ and dimension $2n_0$. Let $(A_0',\cdots,A_{l'}')$
be the weight distribution of $\cC_1'$, then $A_i'=A_{(p^d+1)i}$
$(0\leq i\leq l')$.
   \item[(3).] If $k\in\left\{\frac{n}{4},\frac{n}{2},\frac{3n}{4}\right\}$, then $\pi^{-(p^{3k}+1)p^k}=\pi^{-(p^k+1)}$ and the dual code of $\calC_1$ has
only one zero. This special case is trivial.
\end{itemize}
\end{remark}

\section{Results on Correlation Distribution of Sequences and Cyclic Code $\cC_2$}

\quad Recall $\phi_{\al,\be}(x)$ in the proof of Lemma \ref{rank}
and $N_{i,\veps}$ in the proof of Theorem \ref{value dis T}. Finally
we will determine the value distribution of $S(\al,\be,\ga)$, the
correlation distribution among sequences in $\calF$ defined in
(\ref{def F}) and the weight distribution of $\cC_2$ defined in
Section 1. The following result will play an important role.
\begin{lemma}\label{last lemma}
Assume $t$ be a divisor of $d$. For any $a\in \bF_{p^t}$ and any
$(\al,\be)\in N_{i,\veps}$ with $\veps=\pm 1$, then the number of
elements $\ga\in \bF_q$ satisfying
\begin{itemize}
    \item[(i).] $\phi_{\al,\be}(x)+\ga=0$ is solvable(choose one solution, say
    $x_0$),
    \item[(ii).] $\Tra_t^n(\alpha x_0^{p^{3k}+1}+\beta x_0^{p^k+1})=a$
\end{itemize}
is
\[
\left\{
    \begin{array}{ll}
    p^{n-id-t}&\text{if}\; s-i\; \text{and}\; d/t\;\text{are both
    odd},\;\text{and}\; a=0,\\[2mm]
    p^{n-id-t}+\veps \eta'(a)p^{\frac{n-id-t}{2}}&\text{if}\; s-i\; \text{and}\; d/t\;\text{are both
    odd},\;\text{and}\; a\neq 0,\\[2mm]
    p^{n-id-t}+
    \veps (p^t-1)p^{\frac{n-id}{2}-t}&\text{if}\; s-i\; \text{or}\; d/t\;\text{is
    even},\;\text{and}\; a=0,\\[2mm]
    p^{n-id-t}-\veps p^{\frac{n-id}{2}-t}&\text{if}\; s-i\; \text{or}\; d/t\;\text{is
    even},\;\text{and}\; a\neq 0.
    \end{array}
\right.
\]
where $\eta'$ is the quadratic (multiplicative) character on
$\bF_{p^t}$.
\end{lemma}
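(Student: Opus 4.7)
The plan is to lift the condition on $\ga$ to the source $\bF_q$ using the linearization $\phi_{\al,\be}$, reduce the desired count to a count of $x\in\bF_q$ with prescribed value of $\Tra_t^n(f_{\al,\be}(x))$, and then evaluate the latter by orthogonality combined with the transformation law for $T(b\al,b\be)$ implied by Lemma \ref{qua}.

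First I would establish that the quantity $a=\Tra_t^n(f_{\al,\be}(x_0))$ depends only on $\ga$ and not on the particular solution $x_0$ of $\phi_{\al,\be}(x)+\ga=0$. From the polarization identity
\[
f_{\al,\be}(x_0+z)=f_{\al,\be}(x_0)+\al\bigl(x_0^{p^{3k}}z+x_0z^{p^{3k}}\bigr)+\be\bigl(x_0^{p^{k}}z+x_0z^{p^{k}}\bigr)+f_{\al,\be}(z),
\]
and using Frobenius invariance of $\Tra_t^n$ (valid since $t\mid k$, so each of $p^{\pm k}$ and $p^{\pm 3k}$ is a power of $p^t$), the cross term rewrites as $\Tra_t^n(x_0\,\phi_{\al,\be}(z))$, while an analogous manipulation yields $\Tra_t^n\bigl(z\,\phi_{\al,\be}(z)\bigr)=2\,\Tra_t^n(f_{\al,\be}(z))$. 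For $z\in\ker\phi_{\al,\be}$ both vanish (using that $p$ is odd), so $\Tra_t^n(f_{\al,\be}(\cdot))$ descends to a well-defined function of $\ga\in\Im\phi_{\al,\be}$.

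Since $\phi_{\al,\be}$ is $\bF_{q_0}$-linear of rank $s-i$, its image has $\bF_p$-dimension $n-id$ and each fiber has $p^{id}$ elements; therefore
\[
\#\{\ga:\text{(i) and (ii) hold}\}=\frac{1}{p^{id}}\,\#\bigl\{x\in\bF_q:\Tra_t^n(f_{\al,\be}(x))=a\bigr\}.
\]
Orthogonality on $\bF_{p^t}$, together with $\Tra_1^t(b\,\Tra_t^n(y))=\Tra_1^n(by)$ for $b\in\bF_{p^t}$, converts the right-hand side into
\[
\frac{1}{p^{t+id}}\sum_{b\in\bF_{p^t}}\zeta_p^{-\Tra_1^t(ab)}\,T(b\al,b\be).
\]
Scaling the matrix $H_{\al,\be}$ by $b\in\bF_{q_0}^*$ preserves its rank and multiplies its discriminant by $b^{s-i}$, so Lemma \ref{qua}(i) gives $T(b\al,b\be)=\eta_0(b)^{s-i}T(\al,\be)$. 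For $b\in\bF_{p^t}^*$ one has $\eta_0(b)=\eta'(b)^{(p^d-1)/(p^t-1)}$, and the exponent $1+p^t+\cdots+p^{d-t}$ is odd iff $d/t$ is odd; hence $\eta_0(b)^{s-i}$ equals $\eta'(b)$ when both $s-i$ and $d/t$ are odd and is the trivial character otherwise.

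The proof then splits into two regimes. When $\eta_0(b)^{s-i}$ is trivial on $\bF_{p^t}^*$, the $b\neq 0$ contribution collapses to $T(\al,\be)\sum_{b\in\bF_{p^t}^*}\zeta_p^{-\Tra_1^t(ab)}$, which equals $(p^t-1)T(\al,\be)$ for $a=0$ and $-T(\al,\be)$ for $a\neq0$; substituting the integer-exponent value $T(\al,\be)=\veps p^{(n+id)/2}$ (integer because $s-i$ or $d$ is even in this regime) and dividing by $p^{t+id}$ yields the last two formulas of the lemma directly. When both $s-i$ and $d/t$ are odd, the $b\neq0$ sum factors as $T(\al,\be)\sum_{b\in\bF_{p^t}^*}\eta'(b)\zeta_p^{-\Tra_1^t(ab)}$, which vanishes for $a=0$ (giving $p^{n-id-t}$) and, for $a\neq0$, reduces to $\eta'(-a)$ times the quadratic Gauss sum over $\bF_{p^t}$. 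The main technical hurdle is this last case: one must combine the Davenport--Hasse evaluation of the $\bF_{p^t}$-Gauss sum with the $\sqrt{p^*}$-factor hidden in $T(\al,\be)$ (present exactly when $n+id$ is odd, i.e.\ when $d$ is odd) and with the identity $\eta'(a)/\eta'(-a)=\eta'(-1)=(-1)^{(p^t-1)/2}$, and then check uniformly across all sub-parities of $t$ and $d$ that the result simplifies to the clean value $\veps\eta'(a)p^{(n-id-t)/2}$ claimed in the lemma.
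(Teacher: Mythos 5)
Your proposal is correct and follows essentially the same route as the paper's Appendix C argument: well-definedness of $a$ on fibers of $\phi_{\al,\be}$ via polarization (the paper phrases this with the matrix $H_{\al,\be}$), reduction to counting $x$ with $\Tra_t^n(f_{\al,\be}(x))=a$ divided by the fiber size $p^{id}$, orthogonality over $\bF_{p^t}$ combined with $T(b\al,b\be)=\eta_0(b)^{s-i}T(\al,\be)$, and the parity analysis of $\eta_0^{s-i}$ restricted to $\bF_{p^t}^*$ leading to Ramanujan-type sums versus quadratic Gauss sums. The final Gauss-sum normalization you flag as the "technical hurdle" is exactly the step the paper also dispatches by citation to Lidl--Niederreiter.
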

\begin{proof}
see \textbf{Appendix C.}
\end{proof}

Let $p^*=(-1)^{\frac{p-1}{2}}p$ and $\left(\frac{\cdot}{p}\right)$
be the Legendre symbol.
 We are now ready to give the value
distribution of $S(\al,\be,\ga)$.

\begin{theo}\label{value dis S}
The value distribution of the multi-set
$\left\{S(\al,\be,\ga)\left|\al\in \bF_{p^m},(\be,\ga)\in
\bF_q^2\right.\right\}$ is shown as following.

 (i). If $d'=d$ is odd (that is $n$ is odd), then
\begin{center}
\begin{tabular}{|c|c|}
\hline
values & multiplicity \\[2mm]
\hline $
\veps\sqrt{p^*}p^{\frac{n-1}{2}}$&$\frac{1}{2}p^{n+2d-1}\frac{(p^n-p^{n-d}-p^{n-2d}+1)({p^n-1})}{{p^{2d}-1}}$
\\[2mm]
\hline $ \veps \zeta_p^j
\sqrt{p^*}p^{\frac{n-1}{2}}$&$\frac{1}{2}p^{2d}(p^{n-1}+\veps
\left(\frac{-j}{p}\right)p^{\frac{n-1}{2}})\frac{(p^n-p^{n-d}-p^{n-2d}+1)({p^n-1})}{p^{2d}-1}$
\\[2mm]
\hline $\veps
p^{\frac{n+d}{2}}$&$\frac{1}{2}p^{n-d-1}(p^{\frac{n-d}{2}}+\veps
(p-1))(p^{\frac{n-d}{2}}+\veps)(p^n-1)$
\\[2mm]
\hline $\veps \zeta_p^j
p^{\frac{n+d}{2}}$&$\frac{1}{2}p^{n-d-1}(p^{\frac{n-d}{2}}-\veps)(p^{\frac{n-d}{2}}+\veps)(p^n-1)$
\\[2mm]

\hline
 $\veps\sqrt{{p}^*}{p}^{\frac{n+2d-1}{2}}$&$\frac{1}{2}p^{n-2d-1}\frac{(p^n-1)(p^{n-d}-1)}{{p^{2d}-1}}$
\\[2mm]
\hline
 $\veps\zeta_p^j\sqrt{{p}^*}{p}^{\frac{n+2d-1}{2}}$&$\frac{1}{2}(p^{n-2d-1}+\veps \left(\frac{-j}{p}\right)p^{\frac{n-2d-1}{2}})\frac{(p^n-1)(p^{n-d}-1)}{{p^{2d}-1}}$
\\[2mm]
\hline $0$&$(p^{n}-1)(p^{2n-d}-p^{2n-2d}+p^{2n-3d}-p^{n-2d}+1)$
\\[2mm]
\hline $p^n$&$1$
\\[2mm]
\hline
\end{tabular}
\end{center}
where $\veps =\pm 1, 1\leq j\leq p-1$.

(ii). If $d'=d$ is even, then

\begin{center}
\begin{tabular}{|c|c|}
\hline
values & multiplicity \\[2mm]
\hline $ \veps
p^{\frac{n}{2}}$&$\frac{1}{2}p^{2d}(p^{n-1}+\veps
(p-1)p^{\frac{n}{2}-1})\frac{(p^n-p^{n-d}-p^{n-2d}+1)(p^n-1)}{{p^{2d}-1}}$
\\[2mm]
\hline $ \veps \zeta_p^j
p^{\frac{n}{2}}$&$\frac{1}{2}p^{2d}(p^{n-1}-\veps
p^{\frac{n}{2}-1})\frac{(p^n-p^{n-d}-p^{n-2d}+1)(p^n-1)}{p^{2d}-1}$
\\[2mm]
\hline $\veps
p^{\frac{n+d}{2}}$&$\frac{1}{2}p^{n-d-1}(p^{\frac{n-d}{2}}+\veps
(p-1))(p^{\frac{n-d}{2}}+\veps)(p^n-1)$
\\[2mm]
\hline $\veps \zeta_p^j
p^{\frac{n+d}{2}}$&$\frac{1}{2}p^{n-d-1}(p^{\frac{n-d}{2}}-\veps)(p^{\frac{n-d}{2}}+\veps)(p^n-1)$
\\[2mm]

\hline
 $\veps{p}^{\frac{n+2d}{2}}$&$\frac{1}{2}(p^{n-2d-1}+\veps (p-1)p^{\frac{n-2d}{2}-1})\frac{(p^n-1)(p^{n-d}-1)}{p^{2d}-1}$
\\[2mm]
\hline
 $\veps\zeta_p^j{p}^{\frac{n+2d}{2}}$&$\frac{1}{2}(p^{n-2d-1}-\veps p^{\frac{n-2d}{2}-1})\frac{(p^n-1)(p^{n-d}-1)}{p^{2d}-1}$
\\[2mm]
\hline $0$&$(p^{n}-1)(p^{2n-d}-p^{2n-2d}+p^{2n-3d}-p^{n-2d}+1)$
\\[2mm]
\hline $p^n$&$1$
\\[2mm]
\hline
\end{tabular}
\end{center}
where $\veps =\pm 1, 1\leq j\leq p-1$.

(iii). For the case $d'=2d$,
\begin{center}
\begin{tabular}{|c|c|}
\hline
values & multiplicity \\[2mm]
\hline $\mu p^m$& $
\frac{(p^n-1)(p^{n-1}+\mu (p-1)p^{m-1})(p^{n+6d}-p^{n+4d}-p^{n+d}
+\mu p^{m+5d}-\mu p^{m+4d}+p^{6d})}{(p^d+1)(p^{2d}-1)(p^{3d}+1)}
$
\\[2mm]
\hline $\mu \zeta_p^j p^m$& $
\frac{(p^n-1)(p^{n-1}-\mu p^{m-1})(p^{n+6d}-p^{n+4d}-p^{n+d}+\mu p^{m+5d}-\mu p^{m+4d}+p^{6d})}{(p^d+1)(p^{2d}-1)(p^{3d}+1)}
$
\\[2mm]

\hline $-\mu {p}^{m+d}$& $
\frac{(p^{n-2d-1}-\mu(p-1)p^{m-d-1})(p^n-1)
(p^{n+3d}+p^{n+2d}-p^n-p^{n-d}-p^{n-2d}-\mu p^{m+3d}
+\mu p^{m}+p^{3d})}{(p^d+1)^2(p^{2d}-1)}
$
\\[2mm]

\hline $-\mu \zeta_p^j{p}^{m+d}$& $
\frac{(p^{n-2d-1}+\mu p^{m-d-1})(p^n-1)(p^{n+3d}+p^{n+2d}-p^n
-p^{n-d}-p^{n-2d}-\mu p^{m+3d}+\mu p^{m}+p^{3d})}{(p^d+1)^2(p^{2d}-1)}
$
\\[2mm]

\hline
 $\mu {p}^{m+2d}$&$
\frac{(p^{m-d}+\mu)(p^{m+d}+p^m-p^{m-2d}-\mu p^d)
(p^{n-4d-1}+\mu(p-1)p^{m-2d-1})(p^{n}-1)}{(p^d+1)^2(p^{2d}-1)}
$
\\[2mm]
\hline
 $\mu\zeta_p^j{p}^{m+2d}$&$
 \frac{(p^{m-d}+\mu)(p^{m+d}+p^m-p^{m-2d}-\mu p^d)
(p^{n-4d-1}-\mu p^{m-2d-1})(p^{n}-1)}{(p^d+1)^2(p^{2d}-1)}
$
\\[2mm]
\hline $-\mu {p}^{m+3d}$&$
\frac{(p^{m-2d}-\mu)(p^{m-d}+\mu)(p^{n-6d-1}
-\mu(p-1)p^{m-3d-1})(p^{n}-1)}{(p^d+1)(p^{2d}-1)(p^{3d}+1)}
$
\\[2mm]
\hline $-\mu \zeta_p^j {p}^{m+3d}$&$
\frac{(p^{m-2d}-\mu)(p^{m-d}+\mu)
(p^{n-6d-1}+\mu p^{m-3d-1})(p^{n}-1)}{(p^d+1)(p^{2d}-1)(p^{3d}+1)}$
\\[2mm]
\hline $0$&$
\begin{array}{ll}
(p^n-1)\left(1-\mu p^{3m-d}-\mu p^{3m-8d}+p^{n-d}+\right.&\\[2mm]
\qquad\left.\frac{ p^{2n}+p^{2n-9d}
+\mu p^{3m-3d}-\mu p^{3m-5d} -p^{n-4d}-p^{n-6d}}{p^d+1}\right)&
\end{array}
$\\[2mm]

\hline $p^n$&$1$
\\[2mm]
\hline
\end{tabular}
\end{center}
for $1\leq j\leq p-1$.
\end{theo}
\begin{proof}
For (i) and (ii), see \cite{Zen Hu Jia Yue Cao}. For (iii), see
\textbf{Appendix C.}
\end{proof}

In order to give the correlation distribution among the sequences in
$\calF$, we need an easy observation.

\begin{lemma}\label{q-2}
For any given $\ga\in \bF_q^*$, when $(\al,\be)$ runs through
$\bF_{q}\times \bF_q$, the distribution of $S(\al,\be,\ga)$ is the
same as $S(\al,\be,1)$.
\end{lemma}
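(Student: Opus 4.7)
The plan is to apply the substitution $x = \gamma^{-1} y$ in the defining sum $S(\alpha,\beta,\gamma)$ and observe that this transforms the linear coefficient $\gamma$ into $1$, while only rescaling the coefficients $\alpha$ and $\beta$ by fixed nonzero constants. Concretely, since $\gamma \in \bF_q^*$ is invertible, letting $y = \gamma x$ gives
\begin{equation*}
S(\alpha,\beta,\gamma) = \sum_{y \in \bF_q} \zeta_p^{\Tra_1^n\!\left(\alpha \gamma^{-(p^{3k}+1)} y^{p^{3k}+1} + \beta \gamma^{-(p^k+1)} y^{p^k+1} + y\right)} = S\!\left(\alpha \gamma^{-(p^{3k}+1)},\ \beta \gamma^{-(p^k+1)},\ 1\right).
\end{equation*}

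Next I would note that the map
\begin{equation*}
\Phi_\gamma : \bF_q \times \bF_q \longrightarrow \bF_q \times \bF_q, \qquad (\alpha,\beta) \longmapsto \left(\alpha \gamma^{-(p^{3k}+1)},\ \beta \gamma^{-(p^k+1)}\right)
\end{equation*}
is a bijection, because each coordinate is multiplication by a fixed nonzero element of $\bF_q$. Therefore the multi-set $\{S(\alpha,\beta,\gamma) \mid (\alpha,\beta) \in \bF_q^2\}$ is obtained by reindexing the multi-set $\{S(\alpha',\beta',1) \mid (\alpha',\beta') \in \bF_q^2\}$ through $\Phi_\gamma$, and hence the two multi-sets coincide.

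There is no real obstacle here; the statement is essentially a change-of-variables reindexing. The only point worth checking is that the exponents $p^{3k}+1$ and $p^k+1$ are homogeneous in $x$ (which they are, as the Dembowski-Ostrom monomials), so the substitution cleanly separates the action on $(\alpha,\beta)$ from the action on the linear term.
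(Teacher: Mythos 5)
Your proof is correct and is precisely the change-of-variables argument the paper has in mind (the paper states this lemma without proof, calling it "an easy observation"). The substitution $y=\gamma x$ and the bijectivity of $(\alpha,\beta)\mapsto(\alpha\gamma^{-(p^{3k}+1)},\beta\gamma^{-(p^k+1)})$ together give the claim exactly as you wrote.
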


As a consequence of Theorem \ref{value dis T}, Theorem \ref{value
dis S} and Lemma \ref{q-2}, we could give the correlation
distribution amidst the sequences in $\calF$.

\begin{theo}\label{cor dis}
Let $1\leq k\leq n-1$ and $k\notin\left\{\frac{n}{6},\frac{n}{4},\frac{n}{2},\frac{3n}{4},\frac{5n}{6}\right\}$. The collection $\calF$ defined in (\ref{def F}) is a family of
$p^{2n}$ $p$-ary sequences with period $q-1$. Its correlation
distribution is given as follows.

 (i). For the case $d'=d$ is odd (that is $n$ is odd), then
\begin{center}
\begin{tabular}{|c|c|}
\hline
values & multiplicity \\[2mm]
\hline $
\veps\sqrt{p^*}p^{\frac{n-1}{2}}-1$&$\frac{1}{2}p^{2n+2d}({p^{2n-1}-2p^{n-1}+1})\frac{p^n-p^{n-d}-p^{n-2d}+1}{p^{2d}-1}$
\\[2mm]
\hline $ \veps \zeta_p^j
\sqrt{p^*}p^{\frac{n-1}{2}}-1$&$\frac{1}{2}p^{2n+2d}(p^{n-1}+\veps
\left(\frac{-j}{p}\right)p^{\frac{n-1}{2}})({p^n-2})\frac{p^n-p^{n-d}-p^{n-2d}+1}{p^{2d}-1}$
\\[2mm]
\hline $\veps
p^{\frac{n+d}{2}}-1$&$\frac{1}{2}p^{\frac{5n-d}{2}}(p^{\frac{n-d}{2}}+\veps)\left(p^{\frac{n-d}{2}-1}(p^{\frac{n-d}{2}}+\veps(p-1))(p^n-2)+1\right)$
\\[2mm]
\hline $\veps \zeta_p^j
p^{\frac{n+d}{2}}-1$&$\frac{1}{2}p^{3n-d-1}(p^{\frac{n-d}{2}}-\veps)(p^{\frac{n-d}{2}}+\veps)(p^n-2)$
\\[2mm]

\hline
 $\veps\sqrt{{p}^*}{p}^{\frac{n+2d-1}{2}}-1$&$\frac{1}{2}p^{2n}(p^{2n-2d-1}-2p^{n-2d-1}+1)\frac{p^{n-d}-1}{p^{2d}-1}$
\\[2mm]
\hline
 $\veps\zeta_p^j\sqrt{{p}^*}{p}^{\frac{n+2d-1}{2}}-1$&$\frac{1}{2}p^{2n}(p^{n-2d-1}+\veps \left(\frac{-j}{p}\right)p^{\frac{n-2d-1}{2}})(p^n-2)\frac{p^{n-d}-1}{p^{2d}-1}$
\\[2mm]
\hline $-1$&$p^{2n}(p^{n}-2)(p^{2n-d}-p^{2n-2d}+p^{2n-3d}-p^{n-2d}+1)$
\\[2mm]
\hline $p^n-1$&$p^{2n}$
\\[2mm]
\hline
\end{tabular}
\end{center}
where $\veps =\pm 1, 1\leq j\leq p-1$.

(ii). For the case $d'=d$ is even, then

\begin{center}
\begin{tabular}{|c|c|}
\hline
values & multiplicity \\[2mm]
\hline $ \veps
p^{\frac{n}{2}}-1$&$\frac{1}{2}p^{2n+2d}\left((p^{n-1}+\veps
(p-1)p^{\frac{n}{2}-1})({p^n-2})+1\right)\frac{p^n-p^{n-d}-p^{n-2d}+1}{p^{2d}-1}$
\\[2mm]
\hline $ \veps \zeta_p^j
p^{\frac{n}{2}}-1$&$\frac{1}{2}p^{2n+2d}(p^{n-1}-\veps
p^{\frac{n}{2}-1})({p^n-2})\frac{p^n-p^{n-d}-p^{n-2d}+1}{p^{2d}-1}$
\\[2mm]
\hline $\veps
p^{\frac{n+d}{2}}-1$&$\frac{1}{2}p^{\frac{5n-d}{2}}(p^{\frac{n-d}{2}}+\veps)\left(p^{\frac{n-d}{2}}(p^{\frac{n-d}{2}}+\veps
(p-1))(p^n-2)+1\right)$
\\[2mm]
\hline $\veps \zeta_p^j
p^{\frac{n+d}{2}}-1$&$\frac{1}{2}p^{5n-d-1}(p^{\frac{n-d}{2}}-\veps)(p^{\frac{n-d}{2}}+\veps)(p^n-2)$
\\[2mm]

\hline
 $\veps{p}^{\frac{n+2d}{2}}-1$&$\frac{1}{2}p^{2n}\left(\left(p^{n-2d-1}+\veps (p-1)p^{\frac{n-2d}{2}-1}\right)(p^n-2)+1\right)\frac{p^{n-d}-1}{p^{2d}-1}$
\\[2mm]
\hline
 $\veps\zeta_p^j{p}^{\frac{n+2d}{2}}-1$&$\frac{1}{2}p^{2n}(p^{n-2d-1}-\veps p^{\frac{n-2d}{2}-1})(p^n-2)(p^{n-d}-1)/({p^{2d}-1})$
\\[2mm]
\hline $-1$&$p^{2n}(p^{n}-2)(p^{2n-d}-p^{2n-2d}+p^{2n-3d}-p^{n-2d}+1)$
\\[2mm]
\hline $p^n-1$&$p^{2n}$
\\[2mm]
\hline
\end{tabular}
\end{center}
where $\veps =\pm 1, 1\leq j\leq p-1$.

(iii). For the case $d'=2d$,
\begin{center}
\begin{tabular}{|c|c|}
\hline
values & multiplicity \\[2mm]
\hline $\mu p^m-1$& $
\frac{p^{2n}\left((p^n-2)(p^{n-1}+\mu(p-1)p^{m-1})+1\right)\left(p^{n+6d}-p^{n+4d}-p^{n+d}
+\mu p^{m+5d}-\mu p^{m+4d}+p^{6d}\right)}{(p^d+1)(p^{2d}-1)(p^{3d}+1)}
$
\\[2mm]
\hline $\mu\zeta_p^j p^m-1$& $
\frac{p^{2n}(p^n-2)(p^{n-1}-\mu p^{m-1})(p^{n+6d}-p^{n+4d}-p^{n+d}+\mu p^{m+5d}-\mu p^{m+4d}+p^{6d})}{(p^d+1)(p^{2d}-1)(p^{3d}+1)}
$
\\[2mm]

\hline $-\mu {p}^{m+d}-1$& $
\frac{p^{2n}\left((p^{n-2d-1}-\mu(p-1)p^{m-d-1})(p^n-2)+1\right)(p^{n+3d}+p^{n+2d}-p^n-p^{n-d}-p^{n-2d}-\mu p^{m+3d}
+\mu p^{m}+p^{3d})}{(p^d+1)^2(p^{2d}-1)}
$
\\[2mm]

\hline $-\mu \zeta_p^j{p}^{m+d}-1$& $
\frac{
p^{2n}(p^{n-2d-1}+\mu p^{m-d-1})(p^n-2)(p^{n+3d}+p^{n+2d}-p^n-p^{n-d}
-p^{n-2d}-\mu p^{m+3d}+\mu p^{m}+p^{3d})}{(p^d+1)^2(p^{2d}-1)}
$
\\[2mm]

\hline
 $\mu {p}^{m+2d}-1$&$
\frac{
p^{2n}\left((p^{m-d}+\mu)(p^{n}-2)+1\right)(p^{m+d}+p^m-p^{m-2d}-\mu p^d)
(p^{n-4d-1}+\mu (p-1)p^{m-2d-1})}{(p^d+1)^2(p^{2d}-1)}
$
\\[2mm]
\hline
 $\mu \zeta_p^j{p}^{m+2d}-1$&$
\frac{
p^{2n}(p^{m-d}+\mu)(p^{m+d}+p^m-p^{m-2d}-\mu p^d)
(p^{n-4d-1}-\mu p^{m-2d-1})(p^{n}-2)}{(p^d+1)^2(p^{2d}-1)}
$
\\[2mm]
\hline $-\mu {p}^{m+3d}-1$&$
\frac{
p^{2n}\left((p^{m-2d}-\mu)(p^{n}-2)+1\right)(p^{m-d}+\mu)(p^{n-6d-1}
-\mu(p-1)p^{m-3d-1})}{(p^d+1)(p^{2d}-1)(p^{3d}+1)}
$
\\[2mm]
\hline $-\mu \zeta_p^j {p}^{m+3d}-1$&$
\frac{
p^{2n}(p^{m-2d}-\mu)(p^{m-d}+\mu)
(p^{n-6d-1}+\mu p^{m-3d-1})(p^{n}-2)}{(p^d+1)(p^{2d}-1)(p^{3d}+1)}$
\\[2mm]
\hline $-1$&$
\begin{array}{ll}
p^{2n}(p^n-2)\left(1-\mu p^{3m-d}-\mu p^{3m-8d}+p^{n-d}\right.&\\[2mm]
\qquad\quad \left.+\frac{p^{2n}+p^{2n-9d}
+\mu p^{3m-3d}-\mu p^{3m-5d}-p^{n-4d}-p^{n-6d}}{p^d+1}\right)&
\end{array}
$\\[2mm]

\hline $p^n-1$&$p^{2n}$
\\[2mm]
\hline
\end{tabular}
\end{center}
for $1\leq j\leq p-1$ and $\mu=(-1)^{m/d}$.
\end{theo}

Recall that $\cC_2$ is the cyclic code over $\bF_{p^t}$ with
parity-check polynomial $h_1(x)h_2(x)h_3(x)$ where $h_1(x)$,
$h_2(x)$ and $h_3(x)$ are the minimal polynomials of $\pi^{-1}$,
$\pi^{-(p^k+1)}$ and $\pi^{-(p^{3k}+1)}$ respectively. Here we are
ready to determine the weight distribution of $\cC_2$.

\begin{theo}\label{wei dis C2}
For $n\geq 3$,$k\notin\left\{\frac{n}{6},\frac{n}{4},\frac{n}{2},\frac{3n}{4},\frac{5n}{6}\right\}$, the weight distribution
$\{A_0,A_1,\cdots,A_{q-1}\}$ of the cyclic code $\cC_2$ over $\bF_{p^t}$
($p\geq 3$) with length $q-1$ and
$\mathrm{dim}_{\bF_{p^t}}\cC_1=3n_0$ is shown as follows:

(i). the case $d'=d$ and $d/t$ is odd,
\begin{center}
\begin{tabular}{|c|c|}
\hline
$i$ & $A_i$ \\[2mm]
\hline
$(p^t-1)p^{n-t}-p^{\frac{n+2d-t}{2}}$&$\frac{1}{2}p^{\frac{n-2d-t}{2}}(p^t-1)(p^{\frac{n-2d-t}{2}}+1)\frac{(p^{n-d}-1)(p^n-1)}{p^{2d}-1}$
\\[2mm]
\hline

$(p^t-1)(p^{n-d}-p^{\frac{n+d-2t}{2}})$&$\frac{1}{2}p^{n-d-t}(p^{\frac{n-d}{2}}+1)(p^{\frac{n-d}{2}}+p^t-1)(p^n-1)$
\\[2mm]
\hline

$(p^t-1)p^{n-t}-p^{\frac{n+d-2t}{2}}$&$\frac{1}{2}p^{n-d-t}(p^t-1)(p^{n-d}-1)(p^n-1) $\\[2mm]
\hline

$(p^t-1)p^{n-t}-p^{\frac{n-t}{2}}$&$\frac{1}{2}p^{\frac{n-t}{2}+2d}(p^t-1)(p^{\frac{n-t}{2}}+1)(p^{n}-p^{n-d}-p^{n-2d}+1)\frac{p^n-1}{p^{2d}-1}$\\[2mm]
\hline

$(p^t-1)p^{n-t}$&$
\begin{array}{ll}
(p^n-1)(p^{2n-t}+p^{2n-d}-p^{2n-d-t}-p^{2n-2d}+p^{2n-3d}&\\[2mm]
\qquad-p^{2n-3d-t}+p^{n-t}-p^{n-2d}+p^{n-2d-t}+1)&
\end{array}
$\\[2mm]
\hline

$(p^t-1)p^{n-t}+p^{\frac{n-t}{2}}$&$\frac{1}{2}p^{\frac{n-t}{2}+2d}(p^t-1)(p^{\frac{n-t}{2}}-1)(p^{n}-p^{n-d}-p^{n-2d}+1)\frac{p^n-1}{p^{2d}-1}$\\[2mm]
\hline

$(p^t-1)p^{n-t}+p^{\frac{n+d-2t}{2}}$&$\frac{1}{2}p^{n-d-t}(p^t-1)(p^{n-d}-1)(p^n-1) $\\[2mm]
\hline

$(p^t-1)(p^{n-d}+p^{\frac{n+d-2t}{2}})$&$\frac{1}{2}p^{n-d-t}(p^{\frac{n-d}{2}}-1)(p^{\frac{n-d}{2}}-p^t+1)(p^n-1)$
\\[2mm]
\hline
$(p^t-1)p^{n-t}+p^{\frac{n+2d-t}{2}}$&$\frac{1}{2}p^{\frac{n-2d-t}{2}}(p^t-1)(p^{\frac{n-2d-t}{2}}-1)\frac{(p^{n-d}-1)(p^n-1)}{p^{2d}-1}$
\\[2mm]
\hline
$0$&$1$\\[2mm]
\hline
\end{tabular}
\end{center}

 (ii). the case $d'=d$  and $d/t$ is even,
\begin{center}
\begin{tabular}{|c|c|}
\hline
$i$ & $A_i$ \\[2mm]
\hline
$(p^t-1)(p^{n-t}-p^{\frac{n}{2}+d-t})$&$\frac{1}{2}p^{\frac{n}{2}-d-t}(p^{\frac{n}{2}-d}+p^t-1)(p^{n-d}-1)\frac{p^n-1}{p^{2d}-1}$
\\[2mm]
\hline
$(p^t-1)p^{n-t}-p^{\frac{n}{2}+d-t}$&$\frac{1}{2}p^{\frac{n}{2}-d-t}(p^t-1)(p^{\frac{n}{2}-d}+1)(p^{n-d}-1)\frac{p^n-1}{p^{2d}-1}$\\[2mm]
\hline

$(p^t-1)(p^{n-t}-p^{\frac{n+d}{2}-t})$&$\frac{1}{2}p^{n-d-t}(p^{\frac{n-d}{2}}+1)(p^{\frac{n-d}{2}}+p^t-1)(p^n-1)$\\[2mm]
\hline

$(p^t-1)p^{n-t}-p^{\frac{n+d}{2}-t}$&$\frac{1}{2}p^{n-d-t}(p^t-1)(p^{n-d}-1)(p^n-1)$\\[2mm]
\hline

\hline
$(p^t-1)(p^{n-t}-p^{\frac{n}{2}-t})$&$\frac{1}{2}p^{\frac{n}{2}+2d-t}(p^{\frac{n}{2}}+p^t-1)(p^n-p^{n-d}-p^{n-2d}+1)\frac{p^n-1}{p^{2d}-1}$
\\[2mm]
\hline

$(p^t-1)p^{n-t}-p^{\frac{n}{2}-t}$&$\frac{1}{2}p^{\frac{n}{2}+2d-t}(p^t-1)(p^{\frac{n}{2}}+1)(p^n-p^{n-d}-p^{n-2d}+1)\frac{p^n-1}{p^{2d}-1}$
\\[2mm]
\hline

$(p^t-1)p^{n-t}$&$(p^n-1)(p^{2n-d}-p^{2n-2d}+p^{2n-3d}-p^{n-2d}+1)$\\[2mm]
\hline

$(p^t-1)p^{n-t}+p^{\frac{n}{2}-t}$&$\frac{1}{2}p^{\frac{n}{2}+2d-t}(p^t-1)(p^{\frac{n}{2}}-1)(p^n-p^{n-d}-p^{n-2d}+1)\frac{p^n-1}{p^{2d}-1}$
\\[2mm]
\hline

$(p^t-1)(p^{n-t}+p^{\frac{n}{2}-t})$&$\frac{1}{2}p^{\frac{n}{2}+2d-t}(p^{\frac{n}{2}}-p^t+1)(p^n-p^{n-d}-p^{n-2d}+1)\frac{p^m-1}{p^{2d}-1}$
\\[2mm]
\hline

$(p^t-1)p^{n-t}+p^{\frac{n+d}{2}-t}$&$\frac{1}{2}p^{n-d-t}(p^t-1)(p^{n-d}-1)(p^n-1)$\\[2mm]
\hline

$(p^t-1)(p^{n-t}+p^{\frac{n+d}{2}-t})$&$\frac{1}{2}p^{n-d-t}(p^{\frac{n-d}{2}}-1)(p^{\frac{n-d}{2}}-p^t+1)(p^n-1)$\\[2mm]
\hline

$(p^t-1)p^{n-t}+p^{\frac{n}{2}+d-t}$&$\frac{1}{2}p^{\frac{n}{2}-d-t}(p^t-1)(p^{\frac{n}{2}-d}-1)(p^{n-d}-1)\frac{p^n-1}{p^{2d}-1}$\\[2mm]
\hline
$(p^t-1)(p^{n-t}+p^{\frac{n}{2}-d-t})$&$\frac{1}{2}p^{\frac{n}{2}+d-t}(p^{\frac{n}{2}-d}-p^t+1)(p^{n-d}-1)\frac{p^n-1}{p^{2d}-1}$
\\[2mm]
\hline

$0$&$1$\\[2mm]
\hline
\end{tabular}
\end{center}

(iii). the case $d'=2d$,
\begin{center}
\begin{tabular}{|c|c|}
\hline
$i$ & $A_i$ \\[2mm]

\hline
$(p^t-1)\left(p^{n-t}-\mu p^{m-t}\right)$&
$
\frac{\left(p^{n-t}+\mu(p^t-1) p^{m-t}\right)(p^n-1)\left(p^{n+6d}-p^{n+4d}-p^{n+d}+\mu p^{m+5d}-\mu p^{m+4d}+p^{6d}\right)}{(p^d+1)(p^{2d}-1)(p^{3d}+1)}$
\\[2mm]

\hline
$(p^t-1)p^{n-t}+\mu p^{m-t}$&
$
\frac{(p^t-1)(p^{n-t}-\mu p^{m-t})(p^n-1)\left(p^{n+6d}-p^{n+4d}-p^{n+d}+\mu p^{m+5d}-\mu p^{m+4d}+p^{6d}\right)}{(p^d+1)(p^{2d}-1)(p^{3d}+1)}
$
\\[2mm]

\hline
$(p^t-1)\left(p^{n-t}+\mu p^{m+d-t}\right)$&
$
\frac{\left(p^{n-2d-t}-\mu (p^t-1)p^{m-d-t}\right)(p^n-1)(p^{n+3d}+p^{n+2d}-p^n-p^{n-d}-p^{n-2d}-\mu p^{m+3d}+\mu p^{m}+p^{3d})}{(p^d+1)^2(p^{2d}-1)}
$
\\[2mm]
\hline
$(p^t-1)p^{n-t}-\mu p^{m+d-t}$&
$
\frac{(p^t-1)\left(p^{n-2d-t}+\mu p^{m-d-t}\right)(p^n-1)(p^{n+3d}+p^{n+2d}-p^n-p^{n-d}-p^{n-2d}-\mu p^{m+3d}+\mu p^{m}+p^{3d})}{(p^d+1)^2(p^{2d}-1)}
$
\\[2mm]
\hline

$(p^t-1)\left(p^{n-t}-\mu p^{m+2d-t}\right)$&$
\frac{(p^{n-4d-t}+\mu (p^t-1)p^{m-2d-t})(p^{m-d}+\mu)(p^{m+d}+p^m-p^{m-2d}-\mu p^d)
(p^{n}-1)}{(p^d+1)^3(p^{d}-1)}
$
\\[2mm]
\hline

$(p^t-1)p^{n-t}+\mu p^{m+2d-t}$&$
\frac{(p^t-1)(p^{n-4d-t}-\mu p^{m-2d-t})(p^{m-d}+\mu)(p^{m+d}+p^m-p^{m-2d}-\mu p^d)
(p^{n}-1)}{(p^d+1)^3(p^{d}-1)}
$
\\[2mm]
\hline

$(p^t-1)\left(p^{n-t}+\mu p^{m+3d-t}\right)$&$
\frac{\left(p^{n-6d-t}-\mu (p^t-1)p^{m-3d-t}\right)(p^{m-2d}-\mu)(p^{m-d}+\mu)(p^{n}-1)}{(p^d+1)(p^{2d}-1)(p^{3d}+1)}
$
\\[2mm]

\hline
$(p^t-1)p^{n-t}-\mu p^{m+3d-t}$&$
\frac{(p^t-1)\left(p^{n-6d-t}+\mu p^{m-3d-t}\right)(p^{m-2d}-\mu)(p^{m-d}+\mu)(p^{n}-1)}{(p^d+1)(p^{2d}-1)(p^{3d}+1)}
$
\\[2mm]

\hline

$(p^t-1)p^{n-t}$&$
\begin{array}{ll}
(p^{2n}+p^{2n-9d}-\mu p^{3m}+\mu
p^{3m-d}
+\mu  p^{3m-3d}-\mu p^{3m-5d} - \mu
p^{3m-7d}&\\[1mm]
+\mu
p^{3m-8d}+p^n-p^{n-d}-p^{n-4d}-p^{n-6d}+p^d+1)\frac{p^n-1}{p^d+1}&
\end{array}
$\\[2mm]

\hline $p^n$&$1$
\\[2mm]
\hline
\end{tabular}
\end{center}
where $\mu=(-1)^{\frac{m}{d}}$.

\end{theo}
\begin{proof}
see {\it\textbf{Appendix C.}}
\end{proof}
\begin{remark}
(i). The case (i) and (ii) with $t=1$ has been shown in \cite{Zen Hu Jia Yue Cao}, Theorem 2.

(ii). If $k=n/6$ or $5n/6$, then $\cC_2$ has dimension $5n_0/2$. Its weight distribution has been determined in
\cite{Luo Tan Wan}.
\end{remark}

\section{Appendix A}
{\it \textbf{Proof of Lemma \ref{rank}}(ii)}: For
$Y=(y_1,\cdots,y_s)\in \bF_{q_0}^s$, $y=y_1v_1+\cdots+y_sv_s\in
\bF_q$, we know that
\begin{equation}\label{bil form1}
F_{\alpha,\beta}(X+Y)-F_{\alpha,\beta}(X)-F_{\alpha,\beta}(Y)=2XH_{\alpha,\beta}Y^T
\end{equation}
is equal to
\begin{equation}\label{bil form2}
f_{\alpha,\beta}(x+y)-f_{\alpha,\beta}(x)-f_{\alpha,\beta}(y)=\Tra_{d}^n\left(y^{p^{3k}}(\alpha^{p^{3k}}
x^{p^{6k}}+\be^{p^{3k}} x^{p^{4k}}+\beta^{p^{2k}} x^{p^{2k}}+\al
x)\right)
\end{equation}

 Let
\begin{equation}\label{def phi}
\phi_{\al,\be}(x)=\alpha^{p^{3k}} x^{p^{6k}}+\be^{p^{3k}}
x^{p^{4k}}+\beta^{p^{2k}} x^{p^{2k}}+\al x.
\end{equation}
 Therefore,
\[{\setlength\arraycolsep{2pt}
\begin{array}{lcl}
r_{\al,\be}=r& \Leftrightarrow&\text{the number of common solutions of}\;XH_{\alpha,\beta}Y^T=0\;\text{for all}\;Y\in \bF_{q_0}^s\;\text{is}\; q_0^{s-r}, \\[2mm]
& \Leftrightarrow&\text{the number of common solutions of}\;\Tra_{d}^n\left(y^{3k}\cdot\phi_{\al,\be}(x)\right)=0\;\text{for all}\;y\in \bF_q\;\text{is}\; q_0^{s-r}, \\[2mm]
&\Leftrightarrow&\phi_{\al,\be}(x)=0\;\text{has}\; q_0^{s-r}\;
\text{solutions in}\; \bF_q.
\end{array}
}
\]

Fix an algebraic closure $\bF_{p^\infty}$ of $\bF_p$, since the
degree of $p^{2k}$-linearized polynomial $\phi_{\al,\be}(x)$ is
$p^{6k}$ and $\phi_{\al,\be}(x)=0$ has no multiple roots in
$\bF_{p^{\infty}}$ (this fact follows from
$\phi'_{\al,\be}(x)=\al\in \bF_q^*$),  then the zeroes of
$\phi_{\al,\be}(x)$ in $\bF_{p^\infty}$, say $V$, form an
$\bF_{p^{2k}}$-vector space of dimension 3. Note that
$\gcd(n,2k)=2d$. Then $V\cap \bF_{p^n}$ is a vector space on
$\bF_{p^{\gcd(n,2k)}}=\bF_{p^{2d}}$ with dimension at most 3 since
any elements in $\bF_q$ which are linear independent over
$\bF_{p^{2d}}$ are also linear independent over $\bF_{p^{2k}}$(see
\cite{Trac}, Lemma 4). Since $\bF_{p^{2d}}$ could be regarded as a
2-dimensional vector space over $\bF_{p^d}$, then the possible
values of $r_{\al,\be}$ is $s$, $s-2$, $s-4$ and $s-6$ for
$(\al,\be)\in \bF_q^2\backslash\{(0,0)\}$. $\square$

 {\it\textbf{Proof of Lemma \ref{moment}}}: (i). We observe that
\[ {\setlength\arraycolsep{2pt}
\begin{array}{ll}
&\sum\limits_{\al,\be\in
\bF_q}T(\al,\be)=\sum\limits_{\al,\be\in\bF_q}\sum\limits_{x\in
\bF_q}\zeta_p^{\Tra_1^n(\al x^{p^{3k}+1}+\be x^{p^k+1})}\\[3mm]
&\quad\quad=\sum\limits_{x\in\bF_q}\sum\limits_{\al\in
\bF_{q}}\zeta_p^{\Tra_1^n(\al x^{p^{3k}+1})}\sum\limits_{\be\in
\bF_q}\zeta_p^{\Tra_1^n(\be
x^{p^k+1})}=q\cdot\sum\limits_{\stackrel{\al\in
\bF_{q}}{x=0}}\zeta_p^{\Tra_1^n(\al x^{p^{3k}+1})}=p^{2n}.
\end{array}
}
\]
(ii). We can calculate
\[
{ \setlength\arraycolsep{2pt}
\begin{array}{lll}
\sum\limits_{\al,\be\in\bF_q}T(\al,\be)^2&=&\sum\limits_{x,y\in
\bF_q}\sum\limits_{\al\in
\bF_{q}}\zeta_p^{\Tra_1^n\left(\al\left(x^{p^{3k}+1}+y^{p^{3k}+1}\right)\right)}\sum\limits_{\be\in
\bF_q}\zeta_p^{\Tra_1^n\left(\be\left(x^{p^k+1}+y^{p^k+1}\right)\right)}\\[2mm]
&=&M_2\cdot p^{2n}\end{array}}
\] where
$M_2$ is the number of solutions to the equation
\begin{eqnarray}\label{def 2nd}
\left\{
\begin{array}{ll}
 x^{p^{3k}+1}+y^{p^{3k}+1}=0&\\[2mm]
 x^{p^k+1}+y^{p^k+1}=0&
 \end{array}
 \right.
\end{eqnarray}

If $xy=0$ satisfying (\ref{def 2nd}), then $x=y=0$. Otherwise
$(x/y)^{p^{3k}+1}=(x/y)^{p^k+1}=-1$ which yields that
$(x/y)^{p^{2k}-1}=1$. Denote by $x=ty$.  Since $\gcd(2k,n)=d'$, then
$t\in \bF_{p^{d'}}^*$.
\begin{itemize}
  \item If $d'=d$, then $t\in \bF_{p^d}^*$ and (\ref{def 2nd}) is
  equivalent to $x^2+y^2=0$. Hence $t^2=-1$. There are two or none
  of $t\in \bF_{p^d}^*$ satisfying $t^2=-1$ depending on $p^d\equiv
  1\pmod 4$ or $p^d\equiv
  3\pmod 4$. Therefore
  \[
  M_2=\left\{
  \begin{array}{ll}
  1+2(q-1)=2q-1, &\text{if}\; p^d\equiv
  1\pmod 4\\[2mm]
  1, &\text{if}\; p^d\equiv
  3\pmod 4.
  \end{array}
  \right.
  \]
  \item If $d'=2d$, then by (\ref{rel d d'}) we get (\ref{def 2nd}) is equivalent to
  $x^{p^d+1}+y^{p^d+1}=0$. Then we have $t^{p^d+1}=-1$ which has
  $p^d+1$ solutions in $\bF_{p^{d'}}^*$. Therefore
  \[M_2=(p^d+1)(p^n-1)+1=p^{n+d}+p^n-p^d.\]
\end{itemize}

(iii). We have
\[\sum\limits_{\al,\be\in\bF_q}T(\al,\be)^3=M_3\cdot q^2\quad \text{where}\]
\begin{eqnarray}\label{num M}
&&M_3=\#\left\{(x,y,z)\in
\bF_q^3\left|x^{p^{3k}+1}+y^{p^{3k}+1}+z^{p^{3k}+1}=0,x^{p^k+1}+y^{p^k+1}+z^{p^k+1}=0\right.\right\}\nonumber\\[2mm]
&&\quad\;=M_2+T'\cdot(q-1)
\end{eqnarray}
and $T'$ is the number of $\bF_q$-solutions of
\begin{equation}\label{def T'}
\left\{
\begin{array}{ll}
x^{p^{3k}+1}+y^{p^{3k}+1}+1=0&\\[2mm]
x^{p^k+1}+y^{p^k+1}+1=0.&
\end{array}
\right.
\end{equation}
Canceling $y$ we have
$\left(x^{p^{3k}+1}+1\right)^{p^k+1}=\left(x^{p^{k}+1}+1\right)^{p^{3k}+1}$
which is equivalent to
\[(x^{p^{4k}}-x)(x^{p^k}-x^{p^{3k}})=0.\]
Therefore $x^{p^{4k}}=x$ or $x^{p^k}=x^{p^{3k}}$.
\begin{itemize}
  \item If $n/d\equiv 2\pmod 4$, then $x\in \bF_{p^{2d}}$ and symmetrically $y\in \bF_{p^{2d}}$. Hence
   (\ref{def T'}) is equivalent
  to $x^{p^d+1}+y^{p^d+1}+1=0$ which is the well-known Hermitian curve on $\bF_{p^{2d}}$.
  If follows that $T'=p^{3d}-p^d$.
  \item If $n/d\equiv 0\pmod 4$, then $x\in \bF_{p^{4d}}$ and hence $y\in \bF_{p^{4d}}$. In this case
  $\left(x^{p^k+1}+y^{p^k+1}+1\right)^{p^{3k}}=x^{p^{3k}+1}+y^{p^{3k}+1}+1$ and then
   (\ref{def T'}) is
  equivalent to $x^{p^d+1}+y^{p^d+1}+1=0$ which is a minimal curve on $\bF_{p^{4d}}$ with genus $\frac{1}{2}p^d(p^d-1)$.
  Hence
  $T'=p^{4d}+1-p^d(p^d-1)p^{2d}-(p^d+1)=p^{3d}-p^d$.
\end{itemize}
Anyway,
$M_3=(p^{n+d}+p^n-p^d)+(p^n-1)(p^{3d}-p^d)=p^{n+3d}+p^n-p^{3d}$.
 $\square$

\begin{remark}
For the case $d'=d$, $\sum\limits_{\al,\be\in \bF_q}T(\al,\be)^3$
can also be determined, but we do not need this result.
\end{remark}

 {\it\textbf{Proof of Lemma \ref{Artin}}}:
We get that
\[
\begin{array}{rcl}
qN&=&\sum\limits_{\om\in \bF_q}\sum\limits_{x,y\in
\bF_q}\zeta_p^{\Tra_1^n\left(\om\left(\al x^{p^{3k}+1}+\be x^{p^k+1}-y^{p^d}+y\right)\right)}\\[2mm]
&=&q^2+\sum\limits_{\om\in \bF_q^*}\sum\limits_{x\in
\bF_q}\zeta_p^{\Tra_1^n\left(\om\left(\al x^{p^{3k}+1}+\be
x^{p^k+1}\right)\right)} \sum\limits_{y\in
\bF_q}\zeta_p^{\Tra_1^n\left(y^{p^d}\left(\om^{p^d}-\om\right)\right)}\\[2mm]
&=&q^2+q\sum\limits_{\om\in \bF_{q_0}^*}\sum\limits_{x\in
\bF_q}\zeta_p^{\Tra_1^n\left(\om\left(\al x^{p^{3k}+1}+\be
x^{p^k+1}\right)\right)}\\[2mm]
&=&q^2+q\sum\limits_{\om\in \bF_{q_0}^*}\sum\limits_{x\in
\bF_q}T(\om\al,\om\be)
\end{array}
\]
where the 3-rd equality follows from that the inner sum is zero
unless $\om^{p^d}-\om=0$, i.e. $\om\in \bF_{q_0}$.

For any $\om\in \bF_{q_0}^*$,  by (\ref{def H_al be}) we have
$F_{\om\al,\om\be}(X)=\om\cdot F_{\al,\be}(X)$,
$H_{\om\al,\om\be}=\om\cdot H_{\al,\be}$ and
$r_{\om\al,\om\be}=r_{\al,\be}$. From Lemma \ref{qua} (i) we know
that
\begin{equation}\label{om rel}
T({\om\al,\om\be})=\sum\limits_{X\in
\bF_{q_0}^s}\zeta_p^{\Tra_1^{d}(XH_{\om\al,\om\be}X^T)}=\eta_0(\om)^{r_{\al,\be}}T(\al,\be).
\end{equation}

In the case $d'=2d$, by Lemma \ref{rank}ii) we get that
$r_{\al,\be}$ is even. Hence $T(\om\al,\om\be)=T(\al,\be)$ for any
$\om\in \bF_{q_0}^*$ and $N=q+(p^d-1)T(\al,\be)$. $\square$

\section{Appendix B}

{\it\textbf{Proof of Theorem \ref{value dis T}}} (ii):

 In the case
$d'=2d$ ($n/d$ is even and $k/d$ is odd) and $r_{\al,\be}=s,s-2,s-4$
or $s-6$ for $(\al,\be)\neq (0,0)$. According to Lemma \ref{qua} and
Lemma \ref{reduce num}, we get that for $(\al,\be)\in N_i$,
$T(\al,\be)=(-1)^{m/d+\frac{i}{2}}p^{m+\frac{i}{2}d}$.

Combining Lemma \ref{rank} and Lemma \ref{moment} we have
\begin{equation}\label{par sum02}
 n_{0}+n_{2}+n_{4}+n_{6}=p^{2n}-1
 \end{equation}

\begin{equation}\label{par sum12}
 n_{0}-p^d\cdot n_{2}+p^{2d}\cdot n_{4}-p^{3d}\cdot n_6=(-1)^{m/d} p^m(p^{n}-1)
 \end{equation}

 \begin{equation}\label{par sum22}
 n_{0}+p^{2d}\cdot n_{2}+p^{4d}\cdot n_{4}+p^{6d}\cdot n_{6}=p^n(p^d+1)(p^n-1).
 \end{equation}

  \begin{equation}\label{par sum32}
 n_{0}-p^{3d}\cdot n_{2}+p^{6d}\cdot n_{4}-p^{9d}\cdot n_{6}=(-1)^{m/d} p^{m+3d}(p^{n}-1).
 \end{equation}
Solving the system of equations consisting of (\ref{par
sum02})--(\ref{par sum32}) yields the result. $\square$

{\it\textbf{Proof of Theorem \ref{wei dis C1}}}:
 From (\ref{Wei}) we know that for each non-zero codeword
$c(\al,\be)=\left(c_0,\cdots,c_{l-1}\right)$ $(l=q-1,
c_i=\Tra_1^n(\al\pi^{(p^{3k}+1)i}+\be \pi^{(p^k+1)i}), 0\leq i\leq
l-1, \text{and}\; (\al,\be)\in \bF_{q}\times\bF_q)$, the Hamming
weight of $c(\al,\be)$ is
\begin{equation}\label{wei c}
w_H\left(c(\al,\be)\right)=p^{n-t}(p^t-1)-\frac{1}{p^t}\cdot
R(\al,\be)
\end{equation}
where
\[R(\al,\be)=\sum\limits_{a\in \bF_{p^t}^*}T(a\al,a\be)=T(\al,\be)\sum\limits_{a\in \bF_{p^t}^*}\eta_0(a)^{r_{\al,\be}}\]
 by Lemma \ref{qua} (i).

 Let $\eta'$ be the quadratic (multiplicative) character on $\bF_q$.
 Then we have
\begin{enumerate}
    \item[(1).] if $d/t$ or $r_{\al,\be}$ is even, then $\sum\limits_{a\in \bF_{p^t}^*}\eta_0(a)^{r_{\al,\be}}=\sum\limits_{a\in \bF_{p^t}^*}
    1=p^t-1$ and $R(\al,\be)=(p^t-1)T(\al,\be)$.
    \item[(2).] if $d/t$  and $r_{\al,\be}$ are both odd, then $\sum\limits_{a\in \bF_{p^t}^*}\eta_0(a)^{r_{\al,\be}}=\sum\limits_{a\in \bF_{p^t}^*}
    \eta'(a)=0$ and $R(\al,\be)=0$.
\end{enumerate}

Thus the weight distribution of $\cC_1$ can be derived from Theorem
\ref{value dis T} and (\ref{wei c}) directly. For example, if  $d/t$
is odd and $d'=d$, then
\begin{itemize}
    \item[(1).] if
    $r_{\al,\be}=s$ and $T(\al,\be)=p^{m}$, then
    $w_H(c(\al,\be))=(p^t-1)(p^{n-t}-p^{m-t})$.
    \item[(2).]  if
    $r_{\al,\be}=s$ and $T(\al,\be)=-p^{m}$, then
    $w_H(c(\al,\be))=(p^t-1)(p^{n-t}+p^{m-t})$.
    \item[(3).] if $r_{\al,\be}=s-1$, then
    $w_H(c(\al,\be))=(p^t-1)p^{n-t}$.
    \item[(4).] if $r_{\al,\be}=s-2$ and $T(\al,\be)=-p^{m+d}$, then
    $w_H(c(\al,\be))=(p^t-1)(p^{n-t}+p^{m+d-t})$.
\end{itemize} $\square$

\section{Appendix C}

{\it\textbf{Proof of Lemma \ref{last lemma}}}:

Define $n(\al,\be,a)$ to be the number of $\ga\in \bF_q$ satisfying
(i) and (ii). From (\ref{def H_al be}) we know that
$XH_{\al,\be}X^T=\Tra_{d}^n(\al x^{p^{3k}+1}+\be x^{p^k+1})$.
Combining (\ref{def A_gamma}), (\ref{bil form1}) and (\ref{bil
form2}) we can get
\begin{eqnarray}\label{linear equivalent}
{\setlength\arraycolsep{2pt}
\begin{array}{lcl}
2XH_{\al,\be}+A_{\ga}=0& \Leftrightarrow&\;2XH_{\al,\be}Y^{T}+A_{\ga}Y^T=0\;\text{for all}\;Y\in \bF_{q_0}^s\\[2mm]
& \Leftrightarrow&\;\Tra_{d}^n\left(y\phi_{\al,\be}(x)\right)+\Tra_{d}^n(\ga y)=0\;\text{for all}\;y\in \bF_q \\[2mm]
& \Leftrightarrow&\;\Tra_{d}^n\left(y(\phi_{\al,\be}(x)+\ga)\right)=0\;\text{for all}\;y\in \bF_q \\[2mm]
&\Leftrightarrow&\phi_{\al,\be}(x)+\ga=0.
\end{array}
}
\end{eqnarray}

Let $x_0$, $x_0'$ be two distinct solutions of (i) (if exists). We
can get $x_0=X_0 \cdot V^T$ and $x'_0=X'_0\cdot V^T$ with $X_0,
X'_0\in \bF_{q_0}^s$ and $V=(v_1,\cdots, v_n)$. Define $\Delta
X_0=X'_0-X_0$ and $\Delta x_0=x'_0-x_0=X_0\cdot V^T$. Then
\[\phi_{\al,\be}(x_0)+\ga=\phi_{\al,\be}(x'_0)+\ga=0\]
gives us
\[2X_0H_{\al,\be}+A_{\ga}=2X'_0H_{\al,\be}+A_{\ga}=0\]
and hence
\[\Delta X_0\cdot H_{\al,\be}=0.\]
It follows that
\[
\begin{array}{ll}
 &X'_0\cdot  H_{\al,\be}\cdot {X'_0}^T=(X_0+\Delta X_0)\cdot H_{\al,\be}\cdot (X_0+\Delta X_0)^T\\[2mm]
 &\qquad=X_0  H_{\al,\be} {X}_0^T+ \Delta X_0\cdot H_{\al,\be}\cdot (\Delta X_0+2 X_0)=X_0  H_{\al,\be} {X}_0^T.
 \end{array}
 \]

 Therefore
\[
\begin{array}{ll}
&\Tra_{t}^m(\al {x'_0}^{p^m+1})+\Tra_{t}^n(\be
{x'_0}^{p^k+1})=\Tra_{t}^{d}\left(\Tra_{t}^m(\al
{x'_0}^{p^m+1})+\Tra_{t}^n(\be
{x'_0}^{p^k+1})\right)=\Tra_{t}^{d}\left(X'_0\cdot
H_{\al,\be}\cdot {X'_0}^T\right)\\[2mm]
&=\Tra_{t}^{d}\left(X_0 H_{\al,\be}
{X_0}^T\right)=\Tra_{t}^{d}\left(\Tra_{t}^m(\al
{x_0}^{p^m+1})+\Tra_{t}^n(\be {x_0}^{p^k+1})\right)=\Tra_{t}^m(\al
{x_0}^{p^m+1})+\Tra_{t}^n(\be {x_0}^{p^k+1}).
\end{array}
\]
Hence $n(\al,\be,a)$ is well-defined (independent of the choice of
$x_0$).

If (i) is satisfied,  that is, $\phi_{\al,\be}(x)+\ga=0$ has
solution(s) in $\bF_q$ which yields that $2XH_{\al,\be}+A_{\ga}=0$
has solution(s). Note that $\mathrm{rank}\, H_{\al,\be}=s-i$.
Therefore $2XH_{\al,\be}+A_{\ga}=0$ has $q_0^i=p^{id}$ solutions
with $X\in \bF_{q_0}^s$ which is equivalent to saying
$\phi_{\al,\be}(x)+\ga=0$ has $p^{id}$ solutions in $\bF_q$.
Conversely, for any $x_0\in \bF_q$, we can determine $\ga$ by
$\ga=-\phi_{\al,\be}(x_0).$ Let $N(\al,\be,a)$ be the number of
$x_0\in \bF_q$ satisfying (ii). Then we have
$n(\al,\be,a)=N(\al,\be,a)\big{/}p^{id}$.

Let $\chi'(x)=\zeta_p^{\Tra_1^{t}(x)}$ with $x\in \bF_{p^t}$ be an
additive character on $\bF_{p^t}$ and
$G(\eta',\chi')=\sum\limits_{x\in \bF_{p^t}}\eta'(x)\chi'(x)$ be the
Gaussian sum on $\bF_{p^t}$. We can calculate
\[
\begin{array}{rcl}
p^t\cdot N(\al,\be,a)&=&\sum\limits_{x\in \bF_q}\sum\limits_{\om\in
\bF_{p^t}}\zeta_p^{\Tra_1^{t}\left(\om\cdot\left(\Tra_{t}^n(\al
x^{p^{3k}+1}+\be
x^{p^k+1})-a\right)\right)}\\[3mm]
&=&p^n+\sum\limits_{\om \in \bF_{p^t}^*}T(\om \al,\om \be)\zeta_p^{-\Tra_1^{t}(a\om)}\\[2mm]
&=&p^n+T(\al,\be)\cdot\sum\limits_{\om \in
\bF_{p^t}^*}\eta_0(\om)^{s-i}\chi'(-a\om)
\end{array}
\]
where the 3-rd equality holds from (\ref{om rel}) for any $\om \in
\bF_{p^t}^*\subset \bF_{q_0}^*$.
\begin{itemize}
    \item If $s-i$ and $d/t$ are both odd, and $a=0$, then
    $\eta_0(\om)^{s-i}=\eta'(\om)$ and $N(\al,\be,0)=p^{n-t}$.
    \item If $s-i$ and $d/t$ are both odd, and $a\neq 0$, then
     \[
\begin{array}{rcl}
N(\al,\be,a)&=&p^{n-t}+\frac{1}{p^t}\cdot
T(\al,\be)\cdot\sum\limits_{\om \in
\bF_{p^t}^*}\eta_0(\om)\chi'(-a\om)\\[2mm]
&=&p^{n-t}+\frac{1}{p^t}\cdot T(\al,\be)\cdot\eta'(-a)\cdot G(\eta',\chi')\\[2mm]
&=&p^{n-t}+\veps \eta'(a)p^{\frac{n+id-t}{2}}
\end{array}
\]
where the 2-nd equality follows from the explicit evaluation of
quadratic Gaussian sums (see \cite{Lid Nie}, Theorem 5.15 and 5.33).
    \item If $s-i$ or $d/t$ is even, and $a=0$, then $\eta_0(\om)^{s-i}=1$
    for any $\om\in \bF_{p^t}^*$ and $N(\al,\be,0)=p^{n-t}+
    \veps (p^t-1)p^{\frac{n+id}{2}-t}$.
    \item If $s-i$ or $d/t$ is even, and $a\neq 0$, then
    \[
\begin{array}{rcl}
N(\al,\be,a)&=&p^{n-t}+\frac{1}{p^t}\cdot T(\al,\be)\cdot \sum\limits_{\om\in \bF_{p^t}^*}\chi'(-a\om)\\[2mm]
&=&p^{n-t}-\veps p^{\frac{n+id}{2}-t}.
\end{array}
\]
\end{itemize}
Therefore we complete the proof by dividing $p^{id}$. $\square$

{\it\textbf{Proof of Theorem \ref{value dis S}}} (iii): Define
\[\Xi=\left\{(\al,\be,\ga)\in \bF_q^3\left|S(\al,\be,\ga)=0 \right.\right\}\]
and $\xi=\big{|}\Xi\big{|}$.

 Recall $n_i,H_{\al,\be},,r_{\al,\be},A_{\ga}$ in Section 1 and
$N_{i,\veps},n_{i,\veps,}$ in the proof of Lemma \ref{rank}. Note
that $2XH_{0,0}+A_{\ga}=0$ is solvable if and only if $\ga=0$. If
$(\al,\be)\in N_{i,\veps}$, then the number of $\ga\in \bF_q$ such
that $2XH_{\al,\be}+A_{\ga}=0$ is solvable is $q_0^{s-i}=p^{n-id}$.

  In the case $d'=2d$, from
Lemma \ref{rank} (i) we get that
\begin{equation}\label{value xi2}
{\setlength\arraycolsep{2pt}
\begin{array}{lll}
\xi&=p^{n}-1+(p^{n}-p^{n-2d})n_{2,1}+(p^{n}-p^{n-4d})n_{4,-1}&\\[2mm]
&\begin{array}{ll} =(p^n-1)\left[1+(p^{2n}+p^{2n-9d}-\veps
p^{3m}+\veps
p^{3m-d}+\veps p^{3m-3d}\right.&\\[1mm]
\quad\left.-\veps p^{3m-5d}- \veps p^{3m-7d}+\veps
p^{3m-8d}+p^n-p^{n-d}-p^{n-4d}-p^{n-6d})/(p^d+1)\right]&
\end{array}
\end{array}
}
\end{equation}

Assume $(\al,\be)\in N_{i,\veps}$ and $\phi_{\al,\be}(x)+\ga=0$ has
solution(s) in $\bF_q$ (choose one, say $x_0$). Then by Lemma
\ref{qua} we get
\[S(\al,\be,\ga)=\zeta_p^{-\Tra_1^n\left(\alpha
x_0^{p^{3k}+1}+\beta x_0^{p^k+1}\right)}\cdot T(\al,\be).\]

Applying Lemma \ref{last lemma} for $t=1$ and Theorem \ref{value dis
T}, we get the result. $\square$

{\it\textbf{Proof of Theorem \ref{cor dis}}}: Recall
$M_{(\al_1,\be_1),(\al_2,\be_2)}(\tau)$ defined in (\ref{cor fun})
and (\ref{coe cor}).
 Fix $(\al_2,\be_2)\in
\bF_{q}\times \bF_q$, when $(\al_1,\be_1)$ runs through
$\bF_{q}\times \bF_q$ and $\tau$ takes value from $0$ to $q-2$,
$(\al',\be',\ga')$ runs through $\bF_{q}\times
\bF_q\times\left\{\bF_{q}\big{\backslash}\{1\}\right\}$ exactly one
time.

For any possible value $\kappa$ of $S(\al,\be,\ga)$, define

\[s_{\kappa}=\#\left\{(\al,\be,\ga)\in \bF_{q}\times \bF_q\times \bF_q\,\displaystyle{|}\,S(\al,\be,\ga)=\kappa\right\}\]

\[s^1_{\kappa}=\#\left\{(\al,\be,\ga)\in \bF_{q}\times \bF_q\times \left\{\bF_q\backslash\{1\}\right\}\,\big{|}\,S(\al,\be,\ga)=\kappa\right\}\]

and
\[t_{\kappa}=\#\left\{(\al,\be)\in \bF_{q}\times \bF_q\,\displaystyle{|}\,T(\al,\be)=\kappa\right\}.\]

By Lemma \ref{q-2} we have
\[s_{\kappa}^1=\frac{q-2}{q-1}\times (s_{\kappa}-t_{\kappa})+t_{\kappa}=\frac{q-2}{q-1}\times s_{\kappa}+\frac{1}{q-1}\times t_{\kappa}.\]

Define $M_{\kappa}$ to be the number of $(\al_1,\be_1,\al_2,\be_2)$
such that $M_{(\al_1,\be_1),(\al_2,\be_2)}=\kappa$. Hence we get
\[M_{\kappa}=p^{2n}\cdot s_{\kappa}^1=p^{2n}\cdot\left(\frac{q-2}{q-1}\cdot s_{\kappa}+\frac{1}{q-1}\cdot t_{\kappa}\right).\]

Then the result follows from Theorem \ref{value dis T} and Theorem
\ref{value dis S}. $\square$

{\it\textbf{Proof of Theorem \ref{wei dis C2}}}: From (\ref{Wei}) we
know that for each non-zero codeword
$c(\al,\be,\ga)=\left(c_0,\cdots,c_{q-2}\right)$ $(
c_i=\Tra_{t}^n(\al\pi^{(p^{3k}+1)i}+\be \pi^{(p^k+1)i}+\ga
\pi^{i}),\, 0\leq i\leq q-2,\, \text{and}\; (\al,\be,\ga)\in
\bF_q\times \bF_q^2)$, the Hamming weight of $c(\al,\be,\ga)$ is
\begin{equation}\label{wei c2}
w_H\left(c(\al,\be,\ga)\right)=p^{n-t}(p^t-1)-\frac{1}{p^t}\cdot
R(\al,\be,\ga)
\end{equation}
where
\[R(\al,\be,\ga)=\sum\limits_{\om\in \bF_{p^t}^*}S(\om\al,\om\be,\om\ga).\]

For any $\om\in \bF_{p^t}^*\subset \bF_{q_0}^*$, we have
$\phi_{\om\al,a\be}(x)+\om\ga=0$ is equivalent to
$\phi_{\al,\be}(x)+\ga=0$. Let $x_0\in \bF_q$ be a solution of
$\phi_{\al,\be}(x)+\ga=0$ (if exist).
\begin{itemize}
    \item[(1).]
If $\phi_{\al,\be}(x)+\ga=0$ has solutions in $\bF_q$,  then by
Lemma \ref{qua} and (\ref{om rel}) we have
\[
\begin{array}{ll}
&S(\om\al,\om\be,\om\ga)=\zeta_p^{-\left(\Tra_1^n(\om\al
x_0^{p^{3k}+1}+\om\be
x_0^{p^k+1})\right)}T(\om\al,\om\be)\\[1mm]
&\qquad=\zeta_p^{-\left(\Tra_1^n(\om\al x_0^{p^{3k}+1}+\om\be
x_0^{p^k+1})\right)}\eta_0(\om)^{r_{\al,\be}}T(\al,\be).
\end{array}
\]
 Hence
\begin{equation}\label{value of R}
R(\al,\be,\ga)=T(\al,\be)\sum\limits_{\om\in
\bF_{p^t}^*}\zeta_p^{-\Tra_1^t\left(\om\cdot\left(\Tra_t^m(\al
x_0^{p^m+1})+\Tra_t^n(\be
x_0^{p^k+1})\right)\right)}\eta_0(\om)^{r_{\al,\be}}.\nonumber
\end{equation}
    Fix $(\al,\be)\in N_{i,\veps}$ for $\veps=\pm 1$,
and suppose $\phi_{\al,\be}(x)+\ga=0$ is solvable in $\bF_q$. Denote
by $\vartheta=\Tra_t^n(\al x_0^{p^{3k}+1}+\be x_0^{p^k+1})$.
Then
\begin{itemize}
    \item if $s-i$ and $d/t$ are both odd, and $\vartheta=0$, then
\[R(\al,\be,\ga)=T(\al,\be)\sum\limits_{\om\in
\bF_{p^t}^*}\eta'(\om)=0.\]
    \item if $s-i$ and $d/t$ are both odd, and $\vartheta\neq 0$, then by the result of
quadratic Gaussian sums
\[
\begin{array}{rcl}
R(\al,\be,\ga)&=&T(\al,\be)\eta'(-\vartheta)G(\eta',\chi')\\[2mm]
&=&\veps\eta'(\vartheta)p^{\frac{n+id+t}{2}},\\[2mm]
&=&\left\{
\begin{array}{ll}
p^{\frac{n+id+t}{2}}&\text{if}\; \veps=\eta'(\vartheta),\\[2mm]
-p^{\frac{n+id+t}{2}}&\text{if}\; \veps=-\eta'(\vartheta).
\end{array}
\right.
\end{array}
\]

    \item if $s-i$ or $d/t$ is even, and $\vartheta=0$, then $\eta_0(\om)^{r_{\al,\be}}=1$ for $\om\in \bF_{p^t}^*$ and
$R(\al,\be,\ga)=(p^t-1)T(\al,\be)=\veps(p^t-1)p^{\frac{n+id}{2}}$.
    \item if $s-i$ or $d/t$ is even, and $\vartheta\neq 0$, then
    $\eta_0(\om)^{r_{\al,\be}}=1$ for $\om\in \bF_{p^t}^*$
and $R(\al,\be,\ga)=-T(\al,\be)=-\veps p^{\frac{n+id}{2}}$.
\end{itemize}
    \item[(2).] If $\phi_{\al,\be}(x)+\ga=0$ has no solutions in
    $\bF_q$ which implies that
$\phi_{\om\al,\om\be}(x)+\om\ga=0$ also has no solutions in $\bF_q$
for any $\om\in \bF_{p^t}^*\subset \bF_{q_0}$. Hence
$S(\om\al,\om\be,\om\ga)=0$ and $R(\al,\be,\ga)=0$.
\end{itemize}

 Thus the weight distribution of $\cC_2$ can be derived from Theorem \ref{value dis T},  Lemma
 \ref{last lemma}, (\ref{value xi2})
 and (\ref{wei c2}) directly.
$\square$

\section{Conclusion and Further Study}

\quad In this paper we have studied the exponential sums
$T(\al,\be)$ and $S(\al,\be,\ga)$ with $\al,\be,\ga\in \bF_q$.
After giving the value distribution of $T(\al,\be)$ and $S(\al,\be,\ga)$, we determine the
correlation distribution among a family of sequences, and the weight
distributions of the cyclic codes $\cC_1$ and $\cC_2$.

For a monomial Dembowski-Ostrom function, the associated exponential sums have been explicitly determined in
\cite{Coul1}, \cite{Coul2}. For a general Dembowski-Ostrom function $f(x)$, Lemma 1 reveals the fact that if the number of the
 solution of the linearized
polynomial related to $f(x)$ is explicitly calculated, then the exponential sums $\sum\limits_{x\in \bF_q}\chi(f(x))$ and
$\sum\limits_{x\in \bF_q}\chi(f(x)+\ga x)$ could be evaluated explicitly up to $\pm 1$. Thereafter, the correlation distribution of sequences and the weight
distributions of the associated cyclic codes are also be determined.

In particular, for the case $f(x)=x^{p^{lk}+1}+x^{p^k+1}$ with $l\geq 5$ odd, we could get the possible values of
$\sum\limits_{x\in \bF_q}\chi(f(x))$ and $\sum\limits_{x\in \bF_q}\chi(f(x)+\ga x)$.
But the first three moment identities developed in Lemma \ref{moment}
is not enough to determine the value distribution. However, we could get the possible weights of the corresponding
cyclic codes. New machinery and technique should be invented to attack this problem.

For $p=2$, the exponential sums $T(\al,\be)$ with $n/d$ odd is well known as Kasami-Welch case. Comparing to the odd characteristic
case, the binary case has one advantage since the values of $T(\al,\be)$ and $S(\al,\be,\ga)$ are all integers and one disadvantage
since the binary quadratic form theory is a little harder to handle. We will deal with the binary version of this manuscript in a
following paper.
\section{Acknowledgements}
\quad The authors will thank the anonymous referees for their
helpful comments.

\end{document}